\newcommand*{\red}[1]{\textcolor{red}{#1}}        
\newcommand*{\ie}{i.e.,\xspace}
\newcommand*{\cf}{,~see\xspace}
\newcommand*{\eg}{e.g.\xspace~}
\newcommand*{\CAL}[1]{\ensuremath{\mathcal{#1}}}
\newcommand*{\abs}[1]{\lvert#1\rvert}	 
\newcommand*{\norm}[1]{\lVert#1\rVert}  
\newcommand*{\Matrix}[1]{\begin{bmatrix} #1 \end{bmatrix}}
\newcommand*{\MAT}[1]{\ensuremath{\bm{#1}}}
\newcommand*{\VEC}[1]{\MAT{#1}}
\newcommand*{\E}{\mathop{}\!\mathrm{E}}			
\newcommand*{\T}{^\textup{\textsf{T}}}			
\newcommand*{\inv}{^{-1}}
\newcommand*{\R}{\ensuremath{\mathbb{R}}}
\newcommand*{\N}{\ensuremath{\mathbb{N}}}
\newcommand*{\K}{\ensuremath{\CAL K}\xspace}
\newcommand*{\Kinfty}{\ensuremath{\CAL K_\infty}\xspace}
\newcommand*{\KL}{\ensuremath{\CAL{KL}}\xspace}
\newcommand*{\GP}{Gaussian process\xspace}
\newcommand*{\GPs}{Gaussian processes\xspace}
\newcommand*{\MPC}{model predictive control\xspace}
\newcommand*{\OCP}{optimal control problem\xspace}
\newcommand*{\uc}{uniformly continuous\xspace}
\newcommand*{\D}{\ensuremath{\CAL D}\xspace}
\newcommand*{\Dk}{\ensuremath{\CAL D_k}\xspace}
\newcommand*{\Dkp}{\ensuremath{\CAL D_{k+1}}\xspace}
\newcommand*{\Dskp}{\ensuremath{\CAL D^{'}_{k+1}}\xspace}
\newcommand*{\DO}{\ensuremath{\CAL D_0}\xspace}
\newcommand*{\Dref}{\ensuremath{\CAL D_\text{ref}}\xspace}
\newcommand*{\Draw}{\ensuremath{\CAL D_\text{raw}}\xspace}
\newcommand*{\Dcomb}{\ensuremath{\CAL D_\text{comb}}\xspace}
\newcommand*{\Op}{\ensuremath{\CAL O}\xspace}
\newcommand*{\OO}{\ensuremath{\Omega_r(\lambda)}\xspace}
\newcommand*{\OOo}{\ensuremath{\Omega^0_r(\lambda)}\xspace}
\newcommand*{\OOk}{\ensuremath{\Omega^k_r(\lambda)}\xspace}
\newcommand*{\U}{\ensuremath{\CAL U}\xspace}
\newcommand*{\Y}{\ensuremath{\CAL Y}\xspace}
\newcommand*{\Ys}{\ensuremath{\CAL Y_\text{s}}\xspace}
\newcommand*{\Yh}{\ensuremath{\CAL Y_\text{h}}\xspace}
\newcommand*{\X}{\ensuremath{\CAL X}\xspace}
\newcommand*{\Xf}{\ensuremath{\CAL X_\text{f}}\xspace}
\newcommand*{\XN}{\ensuremath{\CAL X_N(\lambda)}\xspace}
\newcommand*{\XNo}{\ensuremath{\CAL X^0_N(\lambda)}\xspace}
\newcommand*{\XNk}{\ensuremath{\CAL X^k_N(\lambda)}\xspace}
\newcommand*{\Nx}{\ensuremath{{n_x}}\xspace}
\newcommand*{\Nu}{\ensuremath{{n_u}}\xspace}
\newcommand*{\Nw}{\ensuremath{{n_w}}\xspace}
\newcommand*{\Nsim}{\ensuremath{N_\textup{sim}}\xspace}
\newcommand*{\Nstep}{\ensuremath{N_\textup{step}}\xspace}
\newcommand*{\A}{\MAT A\xspace}
\newcommand*{\G}{\MAT G\xspace}
\renewcommand*{\P}{\MAT P\xspace}
\newcommand*{\KK}{\MAT K\xspace}
\newcommand*{\KKi}{\ensuremath{\MAT K^{-1}}\xspace}
\newcommand*{\Q}{\MAT Q\xspace}
\newcommand*{\RR}{\MAT R\xspace}
\renewcommand*{\S}{\MAT S\xspace}
\newcommand*{\My}{\ensuremath{m_y}\xspace}
\newcommand*{\Mu}{\ensuremath{m_u}\xspace}
\newcommand*{\Sn}{\ensuremath{\sigma^2_\text n}\xspace}
\newcommand*{\Sf}{\ensuremath{\sigma^2_\text f}\xspace}
\newcommand*{\sigmaU}{\ensuremath{\bar\sigma^2}\xspace}
\newcommand*{\Vf}{\ensuremath{V_\text{f}}\xspace}
\newcommand*{\VN}{\ensuremath{V_N}\xspace}
\newcommand*{\cov}{\ensuremath{k}\xspace} 
\newcommand*{\kf}{\ensuremath{\kappa_\text{f}}\xspace}
\newcommand*{\kMPC}{\ensuremath{\kappa_\text{MPC}}\xspace}
\newcommand*{\ls}{\ensuremath{\ell_\text s}\xspace}
\newcommand*{\lb}{\ensuremath{\ell_\text b}\xspace}
\renewcommand*{\b}{\ensuremath{\VEC b}\xspace}
\renewcommand*{\k}{\ensuremath{\VEC k}\xspace}
\newcommand*{\q}{\ensuremath{\VEC q}\xspace}
\newcommand*{\e}{\ensuremath{\VEC e}\xspace}
\newcommand*{\ekV}{\ensuremath{\VEC e_k}\xspace}
\newcommand*{\ep}{\ensuremath{e^\text{p}}\xspace}
\newcommand*{\epU}{\ensuremath{\bar e}\xspace}
\newcommand*{\uk}{\ensuremath{u_k}\xspace}
\newcommand*{\s}{\VEC s\xspace}
\newcommand*{\w}{\VEC w\xspace}
\newcommand*{\wk}{\ensuremath{\VEC w_k}\xspace}
\newcommand*{\x}{\VEC x\xspace}
\newcommand*{\xk}{\ensuremath{\VEC x_k}\xspace}
\newcommand*{\y}{\VEC y\xspace}
\newcommand*{\z}{\VEC z\xspace}
\newcommand*{\zk}{\VEC z_k\xspace}
\newcommand*{\ykp}{\ensuremath{y_{k+1}}\xspace}
\newcommand*{\uref}{\ensuremath{u_\text{ref} }\xspace}
\newcommand*{\xref}{\ensuremath{\VEC x_\text{ref} }\xspace}
\newcommand*{\yref}{\ensuremath{y_\text{ref}}\xspace}
\newcommand*{\hx}{\ensuremath{\hat{\VEC x}}\xspace}
\newcommand*{\hu}{\ensuremath{\hat u}\xspace}
\newcommand*{\hxk}{\ensuremath{\hat{\VEC x}_k}\xspace}
\newcommand*{\hxkp}{\ensuremath{\hat{\VEC x}_{k+1}}\xspace}
\newcommand*{\hyk}{\ensuremath{\hat y_k}\xspace}
\newcommand*{\hykp}{\ensuremath{\hat y_{k+1}}\xspace}
\newcommand*{\fN}{\ensuremath{\hat f}\xspace}
\newcommand*{\FN}{\ensuremath{\hat F}\xspace}
\newcommand*{\xE}{\ensuremath{\tilde{\VEC x}}\xspace}
\newcommand*{\xEp}{\ensuremath{\tilde{\VEC x}^+}\xspace}
\newcommand*{\yk}{\ensuremath{y_k}\xspace}
\newcommand*{\uF}{\ensuremath{\mathbf{\hat u}}\xspace}
\newcommand*{\xF}{\ensuremath{\hat{\mathbf x}}\xspace}
\newcommand*{\uS}{\ensuremath{\hat{\mathbf u}_{k|k}}\xspace}
\newcommand*{\uSp}{\ensuremath{\hat{\mathbf u}_{k+1|k}}\xspace}
\newcommand*{\xkp}{\ensuremath{\VEC x_{k+1}}\xspace}
\newcommand*{\alphaa}{\VEC \alpha\xspace}
\newcommand*{\betaa}{\VEC \beta\xspace}
\newcommand*{\thetaa}{\VEC \theta\xspace}
\newcommand*{\wb}{\ensuremath{\mathbf w}\xspace}
\newcommand*{\zb}{\ensuremath{\mathbf z}\xspace}
\newcommand*{\postM}{\ensuremath{m_+}\xspace}
\newcommand*{\postV}{\ensuremath{\sigma^2_+}\xspace}
\begin{document}

\title{Online learning-based Model Predictive Control with\\
Gaussian Process Models and Stability Guarantees
}

\author[1]{Michael Maiworm}

\author[2]{Daniel Limon}

\author[1]{Rolf Findeisen*}

\authormark{Maiworm \textsc{et al}}

\address[1]{\orgdiv{Laboratory for Systems Theory and Automatic Control},
\orgname{Otto-von-Guericke-University}, \orgaddress{\city{Magdeburg},
\country{Germany}}}

\address[2]{\orgdiv{Department of Systems Engineering and Automation},
\orgname{Universidad de Sevilla}, \orgaddress{\city{Seville},
\country{Spain}}}

\corres{*R. Findeisen, Institute for Automation Engineering,
Otto-von-Guericke-University, Magdeburg,
Germany. \email{rolf.findeisen@ovgu.de}}


\abstract[Summary]{
  Model predictive control allows to provide high performance and safety
  guarantees in the form of constraint satisfaction.
  These properties, however, can be satisfied only if the underlying
  model, used for prediction, of the controlled process is sufficiently
  accurate.
  One way to address this challenge is by data-driven and machine
  learning approaches, such as \GPs, that allow to refine the model
  online during operation.
  We present a combination of an output feedback
  \MPC scheme and a \GP-based prediction model that is capable of
  efficient online learning.
  To this end, the concept of evolving \GPs is combined with recursive
  posterior prediction updates.
  The presented approach guarantees recursive constraint satisfaction and
  input-to-state stability with respect to the model-plant mismatch.
  Simulation studies underline that the \GP prediction model can be
  successfully and efficiently learned online.
  The resulting computational load is significantly reduced via the
  combination of the recursive update procedure and by limiting the
  number of training data points while maintaining good performance.
}

\keywords{predictive control, machine learning, Gaussian processes,
online learning, input-to-state stability, recursive updates}

\jnlcitation{\cname{%
\author{Maiworm M.}, 
\author{D. Limon}, and
\author{R. Findeisen}} (\cyear{2020}), 
\ctitle{Online learning-based Model Predictive Control with Gaussian
Process Models and Stability Guarantees}, \cjournal{International
Journal of Robust and Nonlinear Control}, \cvol{\red{2020;00:1--6}}.}

\maketitle


\section{Introduction}
\label{sec:introduction}

Model predictive control (MPC) \cite{Rawlings2009} is naturally capable
of dealing with multi-input multi-output systems and constraints on the
input, state, and output already in the design process.
This has led to manifold scientific interest, as well as practical
applications. \cite{Mayne2014, Lucia2016}
In terms of performance, MPC can be superior to other control approaches
because the prediction of the process under consideration allows to
compute control actions based on future outcomes and facilitates to take
preview information about references and disturbances into account.
Hence, the prediction model plays a crucial role in MPC.
Unfortunately, there is always a certain process-model error or model
uncertainty present in practice and the system might change over time,
which limits the prediction quality of the model.
One way to deal with this situation is to resort to robust MPC schemes,
such as, for instance, min-max MPC \cite{Scokaert1998}, tube-based MPC
\cite{Mayne2006}, multi-scenario approaches \cite{Lucia2013,
Maiworm2015}, or stochastic approaches\cite{Paulson2018} that take the
uncertainty explicitly into account.

Prediction models are often based on first principles approaches,
which can be very time consuming or even impossible in practice.
Furthermore, if the underlying process or environmental conditions
change, a once good model can degrade and thus needs to be adapted.
An alternative to first principles approaches is to derive
prediction models directly from measured data.
The resulting models, so-called black or grey box models
\cite{Ljung1999}, can in principle be learned or refined during
operation by including newly available data.
Thereby, they can account for changing process dynamics or a
changing process environment.
Combining data-driven with first principles models is another
possibility.\cite{Yang2015, Ostafew2016, Hewing2018}

Although data-driven modeling is not a new field of research, it gained
significant attention over the last years due to increasing
computational power, the possibility to widely collect data, and the
rise of machine learning algorithms, such as neural networks, deep
learning, support vector machines, or Gaussian processes
(GPs)\cite{Rasmussen2006, Kocijan2016}.
Especially the use of GPs within MPC
has attracted significant interest in recent years\cite{Kocijan2003,
Yang2015, Klenske2016, Ostafew2016, Cao2017, Maiworm2018b}. 
However, combining GPs with MPC leads to multiple challenges, such as
the cubical increase of the computational load with the number of
training data points.
This also increases the overall necessary computations to solve
the resulting \OCP.
Furthermore, the utilization of GPs in an \OCP can render the resulting
optimization very nonlinear, even for a small number of data points,
which increases the probability of obtaining suboptimal or infeasible
solutions.
Despite these challenges, GPs are employed together with MPC as they
provide several advantages.
For instance, they do not only allow to compute a prediction
of the system evolution but also a prediction variance (an effective
measure of the uncertainty of the learned model), they are less
susceptible to overfitting, and they have, under certain
circumstances, universal approximation capabilities for a large class of
functions\cite{Steinwart2008}, thereby allowing to model
the underlying dynamics of a wide variety of systems.

In order to reduce the computational load of GPs one can distinguish two
main approaches.
The first approach basically fixes the maximum number of training
data points, while the second approach employs so-called
sparsity\cite{Snelson2006, Lazaro2010}.
The first approach often entails the drawback that the GP might
not be able to model the system with sufficient accuracy throughout the
full operation space.
To compensate for this, one can resort to online learning (or
adaptation) of the \GP during operation, which also allows to account
for time-varying systems or changing environmental conditions.
On the downside, some of the computation time that is saved
by reducing the number of training data points is in turn spent by the
learning process, which includes updates of the training data set and
covariance matrix, recalculation of the covariance matrix inverse, and
hyperparameter optimization in each time step.
While these often computationally extensive calculations can be
performed offline, only very few publications exist that combine MPC
with online learning of GPs.
The required computations often take too long to control most processes.
Thus, GPs are mostly trained/learned offline. \cite{Ostafew2014,
Cao2017, Hewing2017}
Exceptions are, for instance, the works by Ortman et
al\cite{Ortmann2017}, where the system had a large time constant in the
order of hours or Klenske et al\cite{Klenske2016}, which provided a
hyperparameter optimization tailored to the specific application.

Another important aspect when combining \GPs and \MPC is safety,
constraint satisfaction, and stability, for which different approaches
have been proposed.
One can, for example, avoid to enforce stability by design and include
instead the GP posterior variance in the cost function of the \OCP.
This avoids steering the plant into regions where the model validity is
questionable.\cite{Murray2003, Kocijan2005b, Azman2008}
Also, one can perform a posteriori stability verification.
For instance, Berkenkamp et al\cite{Berkenkamp2016a} proposed to learn
the region of attraction of a given closed-loop system, whereas
Vinogradska et al\cite{Vinogradska2016} calculated invariant sets for the
validation of stability in a closed-loop with GP models.
Another possibility is to use invariant \emph{safe sets} and employ a
two-layer control framework, where a safe controller is combined with a
control policy that optimizes performance.\cite{Akametalu2014,
Koller2018, Fisac2018, Wabersich2018}
For instance, in the works by Aswani et al\cite{Aswani2013} and Bethge
et al\cite{Bethge2018} two different prediction models were used in
parallel, where the first is a nominal model, used to guarantee robust
stability using tubes, and the other can be a general learning-based
model (\eg a \GP) used to optimize performance.
In the work of Soloperto et al\cite{Soloperto2018} tube-based MPC was
considered together with GPs, which were also used to derive
robust stability.
To this end, uncertainty sets that are based on the GP variance were used
to construct tightened state and input constraint sets.
Since the uncertainty sets hold probabilistically, the same goes for the
stability result.
The two-layer framework was extended to three layers in
Bastani\cite{Bastani2019}.
The aforementioned approaches are based on the assumption of full state
information and the use of invariant terminal regions.

In Maiworm et al\cite{Maiworm2018b} we considered an output nominal MPC
scheme (which does not require full state information nor terminal
region in the \OCP) with an offline trained GP prediction model and
combined it with input-to-state stability (ISS), a framework that
covers inherent robust stability of nominal MPC and stability of
robust MPC schemes in the presence of constraints\cite{Limon2009}.
If a system under a predictive controller is shown to be ISS, then
this property is preserved even in the case of suboptimal solutions of
the involved optimal control problem.
We outlined conditions under which the GP-MPC scheme is inherently
robustly stable (\ie bounded disturbances lead to bounded effects on the
output) and guarantees recursive constraint satisfaction.
To this end, the uncertainty or disturbance has to be bounded
deterministically.
At the expense of a potentially smaller domain of attraction, the
advantage of guaranteeing inherent robust stability lies in its
simplicity.
The already involved ingredients in MPC merely have to satisfy certain
properties (e.g. uniform continuity).
The aforementioned methods in the literature on the other hand are
conceptually more complex and/or more computationally expensive than the
nominal MPC case because different control layers with backup
controllers are required \cite{Akametalu2014, Koller2018, Fisac2018,
Wabersich2018, Bastani2019}, different prediction models are employed
that have to be evaluated in parallel \cite{Bethge2018}, or tubes have
to be computed \cite{Soloperto2018}.
Furthermore, since the employed MPC formulation provides guarantees
without a terminal region, then if also no state constraints have
to be fulfilled, the resulting optimal control problem is easier to
solve.

In this work, we extend our previous results to the case of a limited
training data set of the \GP and aim towards online learning for a wide
class of applications.
To reduce the computational load we do not consider online
hyperparameter optimization.
Instead, we focus on a recursive approach to adapt the training data set
and compute the inverse covariance matrix tailored to MPC.
The main contributions of this work are:
\begin{itemize}
  \item Online learning of the GP model, by means of adaptation of the
    training data set, at reduced computational cost.
    This facilitates the possibility of deployment for faster processes.
    For this purpose, we employ a recursive formulation to update the GP
    prediction model online.
  \item Guaranteed input-to-state stability with constraint satisfaction
    for the presented online learning approach.
    The result is not confined to \GPs but holds for general
    prediction models that are learned online and satisfy the presented
    conditions.
  \item The extension of the method such that it yields good performance
    with only limited prior process knowledge (\eg lack of training data
    in important regions of the operation space).
    To this end, we incorporate the concept of evolving GPs to facilitate
    online learning by means of adaptation of the training data set.
    \cite{Petelin2011, Kocijan2016} 
    We derive criteria that use the GP prediction error and the variance
    to determine which points to add to the training data set.
  \item The use of analytic linearized GP models for the determination
    of the MPC terminal components.
\end{itemize}

The paper is structured as follows:
The considered problem setup is formulated in
Section~\ref{sec:problem_formulation}.
The concept of \GPs, together with the recursive formulation for online
learning, is outlined in Section~\ref{sec:gaussian_processes}
and used for the formulation of the \OCP in
Section~\ref{sec:model_predictive_control}.
The same section also contains the stability results.
Section~\ref{sec:simulations} presents simulation results with
focus on online learning of the \GP before
Section~\ref{sec:conclusion} concludes the paper.

~\\
\emph{Notation} \quad
Vectors, matrices, and sequences (of vectors or scalars) are set using
bold variables.
For matrices we use upper case ($\MAT Y$), for vectors
slanted lower case ($\VEC y$), and for sequences upright lower
case ($\mathbf{y}$).
Sets are denoted by calligraphic upper case variables ($\CAL Y$).
The distance of a point $\z \in \R^p$ to a set $\Y \subset \R^p$ is
defined as $d(\z,\Y) = \inf_{\y \in \Y} \norm{\y-\z}_\infty$, where
$\norm{\cdot}_\infty$ is the infinity norm (\ie $d(\z,\Y) = 0$ if $\z
\in \Y$).
If not stated otherwise, $\norm{\cdot}$ denotes the Euclidean vector
norm.
A function $\alpha : \R_{\geq 0} \to \R_{\geq 0}$ is a \K-function
if it is continuous, $\alpha(0) = 0$, and if it is strictly increasing.
A function $\alpha : \R_{\geq 0} \to \R_{\geq 0}$ is a \Kinfty-function
if it is a \K-function and unbounded.
A function $\beta : \R_{\geq 0} \times \R_{\geq 0} \to \R_{\geq 0}$ is a
\KL-function if $\beta(s,t)$ is \Kinfty in $s$ for any value of
$t$ and $\lim_{t \to \infty} \beta(s,t) = 0, \forall s \geq 0$.

\section{Problem Formulation}
\label{sec:problem_formulation}

We consider nonlinear discrete-time systems 
represented by a nonlinear auto\-regressive model with exogenous input
(NARX)\footnote{
  Under certain observability assumptions\cite{Levin1997}, a NARX model is
  sufficient to describe the dynamics of a wide class of systems.
}
\begin{subequations}
\begin{align}
  \ykp = ~& f(\xk, \uk) + \epsilon  \label{eq:NARX} \\
  \text{s.t.~} & \uk \in \U  \\
               & \yk \in \Y \ .
\end{align}
\label{eq:NARXsystem}
\end{subequations}
Here $k$ denotes the discrete time index, $\uk \in \R$ the
input, $\yk \in \R$ the output, and $\xk \in \R^\Nx$ is the NARX
``state vector''
\begin{align}
  \xk = \big[ \yk \ \cdots \ y_{k-\My} \ u_{k-1} \ \cdots \ u_{k-\Mu}
        \big]\T
  \label{eq:NARXstate}
\end{align}
that consists of the current and past outputs and inputs, and where $\My,
\Mu$ determine the NARX model order $\Nx = \My + \Mu + 1$.
The output is corrupted by Gaussian noise $\epsilon
\sim \CAL N(0,\Sn)$ with zero mean, noise variance $\Sn$, and bounded
support $\abs{\epsilon} \leq \bar\epsilon < \infty$.\footnote{In real
  processes the measurement noise is always bounded, for instance, due
  to the limitations of the involved data acquisition systems.
}
Inputs and outputs are restricted to lie in the constraint compact
sets $\U \subseteq \R$ and $\Y \subseteq \R$, where \U are hard
constraints and \Y can be hard or soft constraints that we denote by \Yh
and \Ys respectively.
The NARX state and the output are connected via $y_k =
\VEC c\T \xk$ with $\VEC c\T = \Matrix{1 &0 &\cdots &0}$.

The considered control objective is set-point stabilization and
optimal set-point change, \ie we want to steer the system from an initial
point $(\x_0,u_0)$ to a target reference
point $(\xref,\uref)$, while satisfying the constraints and
stabilizing the system at the target.
To this end, we employ \MPC, which requires a model 
\begin{align}
  \hykp = \fN(\xk,\uk)
\end{align} 
of the process \eqref{eq:NARX} that is capable of predicting
future output values with sufficient accuracy.
The hat notation $\hat{(\cdot)}$ denotes an estimated quantity.
We outline an approach to learn the system model approximation
$\fN(\xk,\uk)$ from measured input-output data using a \GP, which is
capable of online learning during operation based on newly available
data.
This results in a GP-based NARX prediction model.

\begin{remark}
  We consider a NARX model with one output that is modeled by a \GP.
  The presented approach can be extended to more outputs, where for each output
  an individual GP is used, c.f. Ostafew et
  al\cite{Ostafew2014,Ostafew2016} or Klenske et al\cite{Klenske2016}.
  The theoretical results obtained in
  Section~\ref{sec:model_predictive_control} are also valid for the
  multi-output case.
\end{remark}

\section{Gaussian Processes}
\label{sec:gaussian_processes}

We first review the basics of \GP regression and then present a
recursive formulation that is based on the concept of evolving GPs.
This facilitates the generation of a NARX prediction
model capable of adapting to changing conditions.
To reduce the online computational cost, we do not consider
online hyperparameter optimization.
Instead, we focus on updating the training data set efficiently and how
to perform the required computations online.
To this end, we combine this concept with a recursive update of the
involved Cholesky decomposition.

\subsection{Basics}

A \GP is a \emph{collection of random variables, any finite number of
which have a joint Gaussian distribution}. \cite{Rasmussen2006}
It generalizes the Gaussian probability
distribution to distributions over functions and can therefore be used
to model/approximate functions that can be used to capture dynamic
systems. \cite{Kocijan2016}
They can be utilized for models purely derived from data or 
combined in a hybrid way with other, for instance, deterministic models.
\cite{Ostafew2016, Hewing2018, Yang2015, McKinnon2017, Berkenkamp2017,
Akametalu2014, Berkenkamp2016a, Koller2018, Soloperto2018}

For regression, GPs are employed to derive or approximate maps of the
form $z = f(\VEC\nu) + \epsilon$ with input \VEC\nu, output $z$, and
where $f(\cdot)$ is the underlying but unknown \emph{latent} function.
The output is assumed to be corrupted by Gaussian noise\footnote{
  The concept of \GPs assumes Gaussian noise in the measurements, \ie
  noise with unbounded support.
  The considered real system \eqref{eq:NARXsystem}, however, is
  corrupted by Gaussian noise with bounded support.
  The resulting approximation error can be absorbed in the prediction
  error \eqref{eq:ep} defined further below.
  On the other hand, Gaussian noise with unbounded support can be
  regained by \emph{GP warping}\cite{Snelson2004}. 
  The smaller the bounded support, the larger the difference between the
  distributions and the larger the correcting effect of warping.
}
$\epsilon \sim \CAL N(0,\Sn)$ with zero mean and noise variance $\Sn$.
The objective is to infer the function $f(\cdot)$ using measured
input-output data $(\VEC\nu,z)$ with a \GP $g(\w)$ with input $\w \in
\R^\Nw$, called \emph{regressor}.
In the present case \eqref{eq:NARX}, we have $z = \ykp$ and
$f(\VEC\nu) = f(\xk,\uk)$.
The regressor of the GP will be $\w_k = (\xk, \uk) \in \R^\Nw$ with
regressor order $\Nw = \Nx + 1$.
For the sake of brevity we omit the dependence on the discrete time step
$k$ in the remainder of this section whenever possible.

The first required element is a GP prior distribution $g(\w) \sim
\CAL{GP} \big( m(\w), \cov(\w,\w') \big)$ that is specified via the mean
function $m(\w) = \E [g(\w)]$ and the covariance function\footnote{The
covariance function is also denoted as \emph{kernel}.
}
$\cov(\w,\w') = \text{cov}[g(\w),g(\w')] = \E
\big[\big(g(\w)-m(\w)\big) \big(g(\w')-m(\w')\big)\big]$
with $\w,\w' \in \R^\Nw$ and $\E[\cdot]$ denoting the expected
value.
The mean and covariance function together with a set of so-called
hyperparameters \thetaa, detailed later, fully specify the GP.

The GP prior is trained/learned using a set of $n$ measured
input-output data points, where the input data set is $\wb = [ \w_1
\ \cdots \ \w_n ]\T \in \R^{n \times \Nw}$ and the output data set 
$\zb = [ z_1 \ \cdots \ z_n ]\T \in \R^{n \times 1}$.
The combined data $\D = \{\wb,\zb\}$ is denoted as training data set
and is used to infer the posterior distribution
\begin{align*}
  g(\w|\D) \sim \CAL{GP} \big( \postM(\w|\D), \postV(\w|\D) \big) \ .
\end{align*}
This is also a \GP with posterior mean $\postM(\w|\D)$ and posterior
variance $\postV(\w|\D)$ given by
\begin{subequations}
\begin{align}
  & \postM(\w|\D) = m(\w) + \cov(\w,\wb) \KKi (\zb - m(\wb))
    \label{eq:postM} \\
  & \postV(\w|\D) = \cov(\w,\w) - \cov(\w,\wb) \KKi \cov(\wb,\w) \ ,
    \label{eq:postV}
\end{align}
\end{subequations}
with $m(\wb) = [ m(\w_1) \ \cdots \ m(\w_n) ]\T \in \R^{n \times 1}$,
$\cov(\w,\wb) = [ \cov(\w,\w_1) \ \cdots \ \cov(\w,\w_n) ] \in \R^{1 \times
n}$, $\cov(\wb,\w) = \cov(\w,\wb)\T$, and $\KK = \cov(\wb,\wb) =
[\cov(\w_i,\w_j)] \in \R^{n \times n}$.

Note that realizations of the posterior can yield infinitely many
function outcomes but as it is conditioned on the training data points,
it rejects all possible functions that do not go through or nearby (if
$\Sn \neq 0$) these points (Fig.~\ref{fig:GPexample}).

\begin{figure}[tb]
  \centering
  \includegraphics[width=0.6\linewidth]{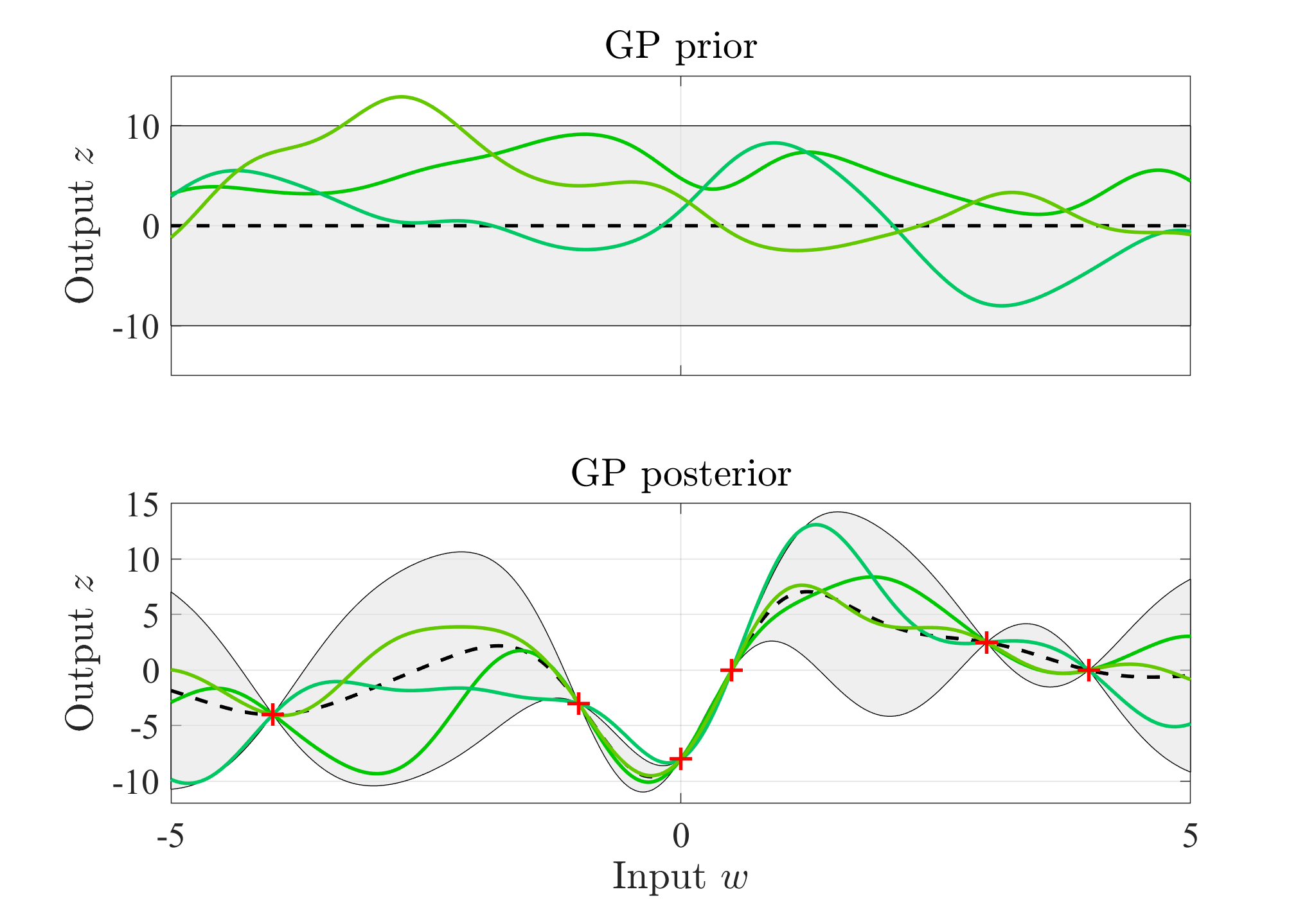}
  \caption{\GP inference: The top figure depicts a GP prior
  distribution with the dashed black line representing the mean
  function $m(\w)$ and the green lines representing random function
  realizations drawn from the prior distribution.
  The grey shaded area is the \unit[95]{\%} (twice the standard
  deviation) confidence interval computed via $\cov(\w,\w')$.
  When data points \D are added (bottom figure, red crosses), the GP
  posterior with $\postM(\w|\D)$ and $\postV(\w|\D)$ is inferred from
  this data.}
  \label{fig:GPexample}
\end{figure}

The posterior mean function \eqref{eq:postM} is the desired
estimator of the unknown output latent function $f(\xk,\uk)$ in
\eqref{eq:NARX}, which we highlight by defining
\begin{align}
  \hykp = \hat z := \fN(\xk,\uk) = \postM(\wk|\Dk) \ .
  \label{eq:GPpredictor}
\end{align}

The key elements for a \GP to yield a sensible model are the prior
mean and covariance function.
Both depend generally on a set of hyperparameters \thetaa, \ie
$m(\w|\thetaa)$ and $\cov(\w,\w'|\thetaa)$.
Very often just a constant zero prior mean $m(\w|\thetaa) = c =
0$ is used. \cite{Kocijan2003, Maiworm2018a, Gregorvcivc2009} 
However, other choices include, for instance, the use of a deterministic
base model $\xkp = f(\xk,\uk)$ as the prior mean function.
\cite{Yang2015, Roberts2013}
Regarding the covariance function, it is often assumed or known
that the system dynamics can be modeled by a member of the space of
smooth functions $C^\infty$.
A covariance function that provides this property is the \emph{squared
exponential covariance function with automatic relevance determination}
\begin{align}
  \cov(\w_i,\w_j|\thetaa) = \Sf \exp \left( -\frac{1}{2} (\w_i-\w_j)\T
  \Lambda (\w_i-\w_j) \right) + \Sn \delta_{ij} ,
\label{eq:covSEard}
\end{align}
where $\w_i,\w_j \in \R^{\Nw}$, $\thetaa = \{ \Sf, \Lambda \}$, and
$\Lambda = \text{diag}(l_1^{-2}, \ldots, l_{\Nw}^{-2})$.
The measurement noise \Sn is added via the Kronecker delta $\delta_{ij}$
in \eqref{eq:covSEard}.
The minimal required number of regressors \Nw can be determined through
optimization of the length scale parameters $l$ in $\Lambda$.
\cite{Williams1996, Kocijan2005a}
Other choices include, for instance, the combination of
\eqref{eq:covSEard} with a linear kernel.
\cite{Williams1996, Ackermann2011}

A common approach to determine the hyperparameters $\thetaa$, given a
training data set $\D = \{\wb,\zb\}$, is to maximize the log marginal
likelihood \cite{Rasmussen2006}
\begin{align}
  \log\big(p(\zb|\wb,\thetaa)\big) = &-\frac{1}{2}\zb\T \KKi \zb
    -\frac{1}{2}\log |\KK| -\frac{n}{2} \log(2\pi) \ . 
    \label{eq:loglike}
\end{align}

An advantage of \GPs is that \eqref{eq:postV} naturally provides a
quantification of the model uncertainty in the form of its variance.
On the other hand, the involved computations in \eqref{eq:postM} and
\eqref{eq:postV} scale with $\Op(n^3)$ due to \KKi, where $n$ is the
number of training data points.
This severely limits the application of GP models for fast processes,
where small sampling times are required;
especially in the case of relatively large training data sets with several
hundred or thousands of data points.
If online or close to online hyperparameter optimization is needed, this
drawback becomes even more pronounced.

\subsection{Evolving Gaussian Processes}
\label{sec:evolving_GPs}

In order to efficiently refine the GP model online we seek to update the
training data set \Dk, possibly at each time step $k$, during operation.
To this end, we resort to the concept of so-called \emph{evolving}
GPs\cite{Petelin2011, Kocijan2016}, which can be used, for instance,
if the training data is only available for certain regions
of the operating space and one wants to expand operation beyond these
regions online.
The concept basically leads to GPs whose training data set \Dk is
updated online using some type of information criterion.
Different criteria can be used to select new data points to be added
and already existing points to be removed if necessary.

The general idea is to include an incoming data point to the training
data set only if it contributes enough new valuable information, which
can be defined in different ways and depends on the respective
application.
Possible options are the use of the information gain, entropy
difference, or the expected likelihood. \cite{Smola2001, Seeger2003}
We employ the GP as a prediction model in MPC and are
therefore particularly interested in how accurate the current
model is able to predict the output value at the next time step and how
confident this prediction is.
To this end, given a new data point $(\wk,\ykp)$, we first define the
prediction error via
\begin{align}
  \ep := \ykp - \hykp = f(\xk, \uk) + \epsilon - \postM(\wk|\Dk) \ ,
  \label{eq:ep}
\end{align}
and define the following rule that determines a new training data set
candidate \Dskp.

\begin{LWrule}[New training data set candidate]
  At the current time step $k$ with regressor \wk and training data set
  \Dk compute $\hykp = \postM(\wk|\Dk)$ and $\postV = \postV(\wk|\Dk)$.
  Once the next output \ykp is available, the new data point
  $(\wk,\ykp)$ is considered as a candidate for inclusion into the
  training data set \Dk
  \begin{algorithmic}
    \If{ $\abs{\ep} > \epU$ \ OR \ $\postV > \sigmaU$ }
    \State $\Dskp = \Dk \cup (\wk,\ykp)$
    \EndIf
  \end{algorithmic}
  where \epU and \sigmaU are pre-specified thresholds and \Dskp is the new
  training data set candidate for $k+1$.
  \label{rule:add_data_point}
\end{LWrule}

Thus, if the prediction error \ep is larger then the threshold \epU,
the data point is considered to be included in the training data set \Dk
because the current posterior model is not able to predict the output
with the specified accuracy.
If it is smaller but the resulting posterior variance $\postV(\w|\Dk)$ is
larger than the threshold $\sigmaU$, the data point is also a candidate
because the current posterior model is not sufficiently confident in its
prediction.
This allows to include data points that are relevant to
attain a certain prediction quality and effectively allows to limit the
necessary number of data points in \Dk.
This becomes especially important for long operation times and many
encountered data points with new information during operation.

\begin{remark}
  Since Rule~\ref{rule:add_data_point} would also consider
  outliers for inclusion, we propose to combine it with an additional
  update rule presented in \cref{thm:nominal_stability}
  (\cref{sec:stability}).
  The application of both update rules is contained in \cref{alg1}.
\end{remark}

As the available computational power is always limited and depending on
the concrete system, this can require the limitation of the maximum
number of points in \Dk by a constant $M \in \N$.\footnote{This approach
is also sometimes denoted as \emph{truncated} GP\cite{Carron2016}.}
If this limit is reached, data points have to be removed to maintain the
size of \Dk.
Again, different criteria can be employed to determine which data point
shall be deleted.
For instance, the point in the training data set with the lowest
benefit for the model quality (\eg the data point that is most
accurately predicted under the current posterior) can be deleted.
This, however, can be computationally expensive because the prediction
has to be evaluated for every of the $M$ training data points at each
time instant $k$.
For online implementation, we employ a more simple approach that
deletes the oldest point contained in \Dk.

\begin{remark}
  The concept of evolving GPs, in particular the outlined data handling
  approach, leads to a training data set \Dk that captures the
  system dynamics in an (evolving) subregion of the whole operating
  region.
  Thus, information about already visited
  regions can be lost when moving towards other regions and have to be
  regained when visited again.
  This could be counteracted, for instance, by exploiting multiple GPs
  for different regions or by GP blending. \cite{Bethge2018}
\end{remark}

\begin{remark}
  In principle, the smaller the thresholds \epU and \sigmaU, the better
  the prediction.
  However, then also the overhead for the computational
  evaluation for adding and removing data points becomes larger.
  In addition, the smaller the thresholds, the smaller the region in
  which the training data set captures the system behavior, given the
  case that only a finite number of training data points is allowed.
  Hence, the selection of the thresholds \epU and \sigmaU is an
  application specific trade-off and might be chosen heuristically by
  the user.
  Some general guidelines are, $(i)$ 
  a lower bound for \sigmaU is the measurement noise variance, and $(ii)$
  \epU could be chosen proportional to $\frac{1}{M} \sum_{i=1}^M \abs{\zb -
  \postM(\wb|\Dk)}$, \ie to the mean value of all the absolute values of
  the prediction errors, based on the current training data set \Dk.
  In the same way \sigmaU could be chosen.
\end{remark}

\subsection{Avoiding Numerical Ill Conditioning for MPC by Cholesky
Decomposition}
\label{sec:cholesky_decomposition}

The squared exponential covariance function \eqref{eq:covSEard} and
other smooth covariance functions lead to a poor conditioned
covariance matrix \KK. \cite{Neal1997, Osborne2010}
This results in numerical problems when computing the inverse \KKi
with computational cost $\CAL O(n^3)$, as
required for \eqref{eq:postM}, \eqref{eq:postV}, or \eqref{eq:loglike}.
These problems become even worse if \eqref{eq:postM} and
\eqref{eq:postV} are nested within an optimization procedure like \MPC.
One way to alleviate this problem is by adding an additional noise or
\emph{jitter} term \cite{Neal1997} to the diagonal of
the covariance matrix.
An effective approach however is to avoid the numerical
instabilities that arise in the explicit computation of the
matrix inverse by performing the required computations using the
Cholesky decomposition, which is numerically more stable.

Given a system of linear equations $\A \x= \VEC b$ with a
symmetric positive matrix $\MAT A$, we denote the solution by $\x =
\A^{-1} \VEC b := \A \backslash \VEC b$.
The Cholesky decomposition of $\A$ is $\A = \RR\T \RR$, where
$\RR = \text{chol}(\A)$ is an upper triangular matrix that is called the
\emph{Cholesky factor}.
It can be used to obtain the solution via $\x = \RR \backslash (\RR\T
\backslash \VEC b)$.
In order to use the Cholesky factor to solve \eqref{eq:postM} and
\eqref{eq:postV}, we define
\begin{align*}
  \alphaa &:= \KKi (\zb - m(\wb)) \\
  \betaa  &:= \KKi \cov(\wb,\w) \ ,
\end{align*}
which can then be computed with the Cholesky decomposition $\KK = \RR\T
\RR$ via
\begin{align}
  \alphaa &= \RR \backslash \left(\RR\T \backslash (\zb - m(\wb) )
    \right) \label{eq:alphaa} \\
  \betaa  &= \RR \backslash \left(\RR\T \backslash \cov(\wb,\w) \right)
    \ . \nonumber
\end{align}

The computational cost of computing \RR is $\Op(\nicefrac{n^3}{6})$
and the cost of computing \alphaa and \betaa is $\Op(n^2)$.
\cite{Rasmussen2006}

If the training data set \Dk does not change, the
Cholesky decomposition $\KK = \RR\T \RR$ and the computation of \alphaa
have to be performed only once at the beginning, whereas
\betaa has to be recomputed for every new test point \w.
If \Dk changes, \ie with each inclusion or removal of a
data point, the covariance matrix \KK has to be updated for an
appropriate evaluation of the GP posterior.
If a data point is included, a row and column have to be added to \KK.
If a data point is removed, the respective row and column associated
with this point have to be removed.
These changes require in principle a full recalculation of the Cholesky
factor \RR, which is the most expensive computation.
To reduce this computational load we employ the approach of
Osborne\cite{Osborne2010} to recalculate the Cholesky factor
recursively, taking advantage of the available factor of the
previous step.
The precise procedure is outlined in the appendix in
Sec.~\ref{sec:recursive_cholesky_factor_update}.

\begin{remark}
  The recursive update of the Cholesky factor can only be applied if the
  hyperparameters \thetaa do not change because otherwise, every single
  element of \KK changes and a recursive approach is not applicable
  anymore.
\end{remark}

\begin{remark}
  Note that in many works\cite{Huber2014, Vaerenbergh2012, Perez2013} not the
  Cholesky decomposition but the covariance matrix inverse \KKi is
  recursively computed, which is based on the partitioned block
  inverse using the \emph{Woodbury matrix identity}.
  Presumably for the numerical issues outlined above, this approach has
  never been used in combination with MPC.
  It has, however, in the signal processing literature,
  where it is strongly connected to the concept of \emph{kernel recursive
  least-squares}. \cite{Vaerenbergh2012, Perez2013}
\end{remark}

Due to the recursive nature, both in the data inclusion approach and the
Cholesky decomposition, we denote the resulting \GP as recursive GP
(rGP).
The most important steps of the resulting rGP-MPC formulation are
presented in Algorithm~\ref{alg1}.

\section{Gaussian Process Based Output Feedback Model Predictive Control}
\label{sec:model_predictive_control}

In this section, we present the output feedback \MPC formulation, based
on the rGP NARX model for prediction.
We highlight the necessary components and show under which
conditions stability can be guaranteed even if the GP model changes
online.

\subsection{Prediction Model}
\label{sec:prediction_model}

In \cref{sec:stability} we establish input-to-state stability
for the considered system, which is defined using the evolution of
the state and not the output.
For this reason, we first reformulate the GP output prediction in terms
of the NARX state \hxk.
We start by setting $k := k+1$ in \hxk and arrive at
\begin{align*}
  \hxkp = \big[\hykp, y_k, \ldots, y_{k+1-\My}, \uk,
            \ldots, u_{k+1-\Mu}\big] \ .
\end{align*}
Since the predicted output \hykp is computed by \eqref{eq:GPpredictor} we
obtain the NARX prediction model
\begin{align}
  \hxkp = \FN(\hxk, \uk | \Dk)
        := \big[\postM(\wk|\Dk), y_k, \ldots, y_{k+1-\My}, \uk,
            \ldots, u_{k+1-\Mu}\big] \ ,
\label{eq:predictionModel}
\end{align}
which we also denote as the \emph{nominal} model.
~\\
Correspondingly, for the NARX model of the real process \eqref{eq:NARX}
we have
\begin{align*}
  \xkp &= \big[y_{k+1}, \yk, \ldots, y_{k+1-\My}, \uk, \ldots,
    u_{k+1-\Mu}\big] \\
       &= \big[f(\xk,\uk) + \epsilon, \yk, \ldots, y_{k+1-\My}, \uk,
       \ldots, u_{k+1-\Mu}\big]
\end{align*}
and due to \eqref{eq:ep} this can be reformulated as
\begin{align}
\begin{aligned}
  \xkp &= \big[\postM(\wk|\Dk) + \ep, \yk, \ldots, y_{k+1-\My}, \uk,
        \ldots, u_{k+1-\Mu}\big] \\
       &= \FN(\xk, \uk | \Dk) + \VEC d \ep =: F(\xk,\uk,\ep) 
\end{aligned}
  \label{eq:NARXstateModel} 
\end{align}
with $\VEC d = \Matrix{1 &0 &\cdots &0}\T$, \ie the real NARX model can
be represented as the superposition of the nominal/prediction model and
the prediction error.

\subsection{MPC Optimization Problem}
\label{sec:optimal_control_problem}

Using the prediction model \eqref{eq:predictionModel}, we consider at
each time step $k$ the optimization problem
\begin{align}
\begin{aligned}
  \min_{\uS} \ & \VN\big(\xk,\uS\big) \\
  \text{s.t.~} &\forall i \in \mathcal I_{0:N-1}\!: \\
  & \hx_{k+i+1|k} = \FN \left( \hx_{k+i|k}, \hu_{k+i|k} |\Dk \right) \\
  & \hx_{k|k} = \xk \\
  & \hu_{k+i|k} \in \U \\
  & \hx_{k+i|k} \in \X \ .
\end{aligned} 
  \label{eq:OCP}
\end{align}
The input sequence to be optimized is denoted by $\uS = \big\{
  \hu_{k|k}, \ldots, \hu_{k+N-1|k} \big\}$, $N$ is the prediction
horizon, \xk is the initial condition of the measured NARX state
\eqref{eq:NARXstate}, and $\X \subseteq \R^\Nx$ is the resulting constrained
set of the NARX state that is a combination of multiple instances of \Yh
depending on the specific composition of \xk.\footnote{If for instance
  $\xk = [\yk, y_{k-1}, y_{k-2}]$, then $\X = \Yh \times \Yh \times \Yh$.
  }
Since \Yh is compact, the resulting \X is also compact.
As cost function in \eqref{eq:OCP} we consider
\begin{align*}
  \VN\big(\xk,\uS\big) = \sum_{i=0}^{N-1} \ell \big( \hx_{k+i|k},
  \hu_{k+i|k}\big) + \lambda \Vf \big( \hx_{k+N|k} - \xref \big) \ ,
\end{align*}
where $\Vf(\cdot)$ is the terminal cost function that is weighted by a
design parameter $\lambda \geq 1$.
The employed positive stage cost is given by
\begin{align*}
  \ell(\hxk,\hu_k) = \ls(\hxk-\xref, \hu_k-\uref) + \lb(\hyk) \ ,
\end{align*}
where $\ls(\cdot)$ penalizes input and state deviations from the
reference and $\lb(\cdot)$ is a barrier function that can
account for soft output constraints \Ys.  
It is defined by
\begin{align*}
  \lb(\hyk) \geq \alpha_\text b \big( d(\hyk,\Ys) \big) \ ,
\end{align*}
and must satisfy $\lb(\hyk) = 0, \forall \hyk \in \Ys$, where
$\alpha_\text b(\cdot)$ is a \K-function and $d(\cdot)$ the distance
function as defined in Section~\ref{sec:introduction}.

The optimal solution of \eqref{eq:OCP} is denoted by $\uS^*$, the
resulting optimal state sequence by $\xF^*_{k|k}$.
The first element of $\uS^*$, i.e.  $\hu^*_{k|k}$, is applied to the
process such that we obtain $\uk = \kMPC(\xk|\Dk) = \hu^*_{k|k}$.
Note that the implicitly defined control law $\kMPC(\xk|\Dk)$ is
time-varying, as well as the resulting optimal cost function
$\VN^*(\xk|\Dk) = V_N(\xk,\uS^*|\Dk)$, also denoted as \emph{value
function}, because they depend on the changing prediction model
associated with \Dk.
Note furthermore that \eqref{eq:OCP} does not include any explicit
terminal region constraint for stability.
This makes its solution less computationally expensive, especially if only
soft output/state constraints are considered.

\subsection{Stability}
\label{sec:stability}

Establishing stability in MPC is often based on the use of a
terminal cost function $\Vf(\cdot)$ and a terminal region
\Xf.\cite{Mayne2000}
Here we employ an approach where the \OCP \eqref{eq:OCP} does
not require an explicit terminal region \Xf.
Instead, we use $\Vf(\cdot)$ weighted by a factor $\lambda$, as proposed
by Limon et al\cite{Limon2006}, to establish input-to-state stability.

\begin{definition}[Input-to-state Stability]
  Consider the closed-loop system $\xkp =
  F\big( \xk,\kMPC(\xk|\Dk),\ekV \big)$.
  The set-point \xref is input-to-state stable (ISS) if there exist a
  \KL-function $\beta(\cdot,\cdot)$ and a \K-function $\gamma(\cdot)$
  such that 
  \begin{align}
    \norm{\xk - \xref} \leq \beta(\norm{\x_0-\xref},k) 
      + \gamma\Big(\max_{k \geq 0}\norm{\ekV}\Big)
  \end{align}
holds for all initial states $\x_0$, errors \ekV, and for all $k$.
\end{definition}

ISS combines nominal stability as well as
uniformly bounded influence of uncertainty in a single condition.
It implies asymptotic stability of the undisturbed (nominal) system (with
$\ekV \equiv 0$) and a bounded effect of the uncertainty on the state
evolution.
Furthermore, if the error signal \ekV fades, the uncertain system
asymptotically converges to the reference point.
We therefore consider stability first for the nominal case, \ie when the
prediction/nominal model \eqref{eq:predictionModel} and the true
system \eqref{eq:NARXstateModel} are exactly the same.
After that, we establish robust stability in the sense of input-to-state
stability.

\subsubsection{Nominal Stability}

In the following, let the current deviation from the reference point and
the deviation at the next time step be $\xE = \x - \xref$ and $\xEp =
\x^+ - \xref$ respectively.
This change of coordinates is required if $\xref \neq 0$.

\begin{assumption}
  Assume that
  \begin{enumerate}
    \item the stage cost function $\ell(\x,u)$ is positive definite,
      \ie $\ell(\xref,\uref)=0$ and there exists a \Kinfty-function
      $\alpha(\cdot)$ such that $\ell(\x,u) \geq \alpha(\norm{\xE})$
      for all $u \in \U$, and 
    \item there exists a terminal control law $\kf(\cdot)$ and 
      a control Lyapunov function $\Vf(\cdot)$ such that the conditions
      \begin{gather*}
        \alpha_1(\norm{\xE}) \leq \Vf(\xE) \leq \alpha_2(\norm{\xE}) \\
        \intertext{and}
        \Vf\big(\xEp\big) \leq \Vf(\xE)
          - \ell\big(\xE+\xref,\kf(\xE)+\uref \big)
      \end{gather*}
      hold for all $\xE \in \Xf = \{ \xE \in \R^\Nx: \Vf(\xE) \leq \nu
      \} \subseteq \X$ with $\nu > 0$ and $\xEp = \FN\big(\xE+\xref,
      \kf(\xE)+\uref | \D\big) - \xref$, and where $\alpha_1(\cdot)$ and
      $\alpha_2(\cdot)$ are \Kinfty-functions.
      The constant $\nu$ is chosen such that $\Xf \subseteq \X$ and
      $\kf(\xE) + \uref \in \U$ for all $\xE \in \Xf$.
  \end{enumerate}
  \label{ass:cost}
\end{assumption}

Assumption~\ref{ass:cost} ensures that the system is locally
asymptotically stable on the positive invariant set \Xf, while
satisfying state and input constraints.
It can be satisfied if we have at least a locally
valid description of the process at the target point.
This can, for instance, be a linearized version of the GP prediction
model at the reference (\eg $\D = \Dref$), which in turn can then be
used to derive a suitable terminal cost and controller.
Possible options are then, for instance, the use of a linear-quadratic
regulator and/or applying Lyapunov methods (see also
Sec.~\ref{sec:terminal_ingredients}).
Although it is sufficient to determine the terminal components from the
nominal model, one could also consider the design of a robust terminal
controller and cost.
For instance, using a GP model for the target region one could consider a
specific probability bound given by the posterior variance and then
based on this design a robust terminal controller.

\begin{theorem}[Nominal stability]
  \label{thm:nominal_stability}
  Let $\kMPC(\xk|\Dk)$ be the predictive controller derived from the
  optimal control problem \eqref{eq:OCP} and let
  Assumption~\ref{ass:cost} be satisfied.
  Furthermore, let \Dk be the training data set at time $k$, \Dkp the
  one that will be used at time $k+1$, and \Dskp the updated training
  data set candidate resulting from Rule~\ref{rule:add_data_point}.
  If \Dk is updated using the additional rule
  \begin{algorithmic}
    \If{ $V_N^*\big( \xk | \Dskp \big) \leq V_N^*\big( \xk | \Dk \big)$ }
    \State $\Dkp \leftarrow \Dskp$
    \Else 
    \State $\Dkp \leftarrow \Dk$ 
    \EndIf
  \end{algorithmic}
  then $\forall \lambda \geq 1$, there exists a feasible region $\XNo
  \subseteq \X$ such that $\forall \x_0 \in \XNo$ the target \xref
  of the nominal closed-loop system $\xkp = \FN\big(\xk,
  \kMPC(\xk|\Dk)\big)$ is asymptotically stable.
  The size of the set $\XNo$ increases with $\lambda$.
\end{theorem}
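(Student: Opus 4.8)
The plan is to adapt the standard Lyapunov argument for MPC without terminal constraint, due to Limon et al., to the setting where the prediction model may change between time steps. The value function $V_N^*(\xk|\Dk)$ will serve as the ISS-Lyapunov function candidate for the nominal system, and the crux is to show that it has a strict decrease along closed-loop trajectories despite the model update. First I would fix $\lambda \geq 1$ and recall from Assumption~\ref{ass:cost} that on the sublevel set $\Xf = \{\xE : \Vf(\xE) \leq \nu\}$ the terminal controller $\kf(\cdot)$ renders $\Vf$ a local control Lyapunov function with the standard decrease inequality. The key structural fact I would invoke (from Limon et al.\cite{Limon2006}) is that weighting $\Vf$ by $\lambda \geq 1$ enlarges the region on which the optimal terminal state $\hx^*_{k+N|k}$ is guaranteed to lie in $\Xf$: define $\XNo$ (for a fixed model $\D$) as the set of initial states from which \eqref{eq:OCP} is feasible and the optimal terminal state lands in $\Xf$; monotonicity of this set in $\lambda$ follows from the same scaling argument, since a larger $\lambda$ penalizes terminal deviation more heavily and thus forces the optimizer to drive the terminal state deeper into $\Xf$.

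Next I would carry out the descent estimate. At time $k$, given the optimizer $\uS^*$ with terminal state in $\Xf$, construct the usual shifted candidate input at time $k+1$ by appending $\kf(\cdot)$ evaluated at the (nominal) terminal state. Feasibility of this candidate for the OCP at $k+1$ \emph{with the same data set $\Dk$} is immediate, and the standard telescoping computation using Assumption~\ref{ass:cost}(2) yields
\begin{align*}
  V_N^*(\xkp | \Dk) \leq V_N^*(\xk | \Dk) - \ell(\xk, \uk)
  \leq V_N^*(\xk | \Dk) - \alpha(\norm{\xEk}),
\end{align*}
where $\xkp = \FN(\xk, \kMPC(\xk|\Dk)|\Dk)$ is the nominal successor and $\ell \geq \alpha(\norm{\cdot})$ from Assumption~\ref{ass:cost}(1). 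The decisive point is now the model update: the additional rule in the theorem statement replaces $\Dkp$ by $\Dskp$ \emph{only if} $V_N^*(\xk|\Dskp) \leq V_N^*(\xk|\Dk)$, so in either branch $V_N^*(\xkp|\Dkp) \leq V_N^*(\xkp|\Dk)$ — this requires noting that the acceptance test is phrased at the \emph{same} argument $\xk$ so the comparison is legitimate, and then combining with a continuity/consistency observation that the successor $\xkp$ does not depend on which $\D$ is selected for the \emph{next} step. Chaining the two inequalities gives $V_N^*(\xkp|\Dkp) \leq V_N^*(\xk|\Dk) - \alpha(\norm{\xEk})$, a genuine strict-decrease condition for the time-varying Lyapunov function.

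Finally I would close the argument by exhibiting the lower and upper \Kinfty bounds on $V_N^*$: the lower bound $V_N^*(\xk|\Dk) \geq \ell(\xk,\uk) \geq \alpha(\norm{\xEk})$ is immediate from positivity of the stage cost, and an upper bound $V_N^*(\xk|\Dk) \leq \alpha_V(\norm{\xEk})$ on $\XNo$ follows from continuity of the cost and of the (finitely many compositions of the) prediction map together with compactness, as is standard. Asymptotic stability on $\XNo$ then follows from the standard Lyapunov theorem for (time-varying) difference inclusions. I expect the main obstacle to be the step that reconciles the model switch with the Lyapunov decrease: one must be careful that the acceptance inequality $V_N^*(\xk|\Dskp) \leq V_N^*(\xk|\Dk)$ is evaluated before the state update while the decrease we need is at $\xkp$, so the argument genuinely relies on the fact that the true (nominal) successor state is model-independent — it is $\FN(\xk,\uk|\Dk)$ regardless of which data set will be used for the next OCP — and that feasibility at $k+1$ under $\Dkp$ is inherited from feasibility under $\Dk$ whenever $\Dkp = \Dskp$, which may need an auxiliary remark since changing the data set changes the prediction dynamics in the constraints; handling this cleanly (e.g. by only using the shifted candidate to certify the bound under $\Dk$ and invoking the acceptance test to transfer it to $\Dkp$) is where the care lies. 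The claim that $|\XNo|$ grows with $\lambda$ is then a direct consequence of the monotonicity established in the first step.
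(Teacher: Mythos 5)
Your overall strategy --- using $\VN^*(\xk|\Dk)$ as a Lyapunov function, certifying the decrease via the shifted candidate, and invoking the acceptance test to absorb the model change --- is the same as the paper's, but you chain the two inequalities in the wrong order, and this creates a genuine gap. You first establish $\VN^*(\xkp|\Dk) \leq \VN^*(\xk|\Dk) - \ell(\xk,\uk)$ entirely under \Dk (which is fine), and then claim that ``in either branch $\VN^*(\xkp|\Dkp) \leq \VN^*(\xkp|\Dk)$.'' The acceptance test, however, compares the optimal costs at the \emph{current} state: it guarantees $\VN^*(\xk|\Dskp) \leq \VN^*(\xk|\Dk)$ and says nothing about the point $\xkp$. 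Your appeal to the model-independence of the successor does not repair this: knowing that $\xkp$ is a fixed, already-determined point does not convert an inequality between the two functions $\VN^*(\cdot|\Dskp)$ and $\VN^*(\cdot|\Dk)$ evaluated at $\xk$ into the same inequality evaluated at $\xkp$; the two value functions can be ordered one way at $\xk$ and the opposite way at $\xkp$. The paper avoids this by applying the update rule exactly at $\xk$, where the test is actually performed, obtaining $\VN^*(\xk|\Dkp)\le\VN^*(\xk|\Dk)$, and carrying out the shifted-candidate telescoping with the cost conditioned on \Dkp, which yields \eqref{eq:VN_decrease}. To keep your ordering you would have to modify the update rule so that it tests the costs at the successor state, which is not what the theorem states.

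Your treatment of the feasible region also departs from the paper's and leaves open precisely the point you flag. The paper takes $\XNk = \{\xk : \VN^*(\xk|\Dk) \le N d + \lambda\nu\}$ following Limon et al\cite{Limon2006}, so that growth with $\lambda$ is immediate from the threshold, and --- crucially --- uses the established decrease of the value function to conclude that the closed-loop state stays inside the current sublevel set even though the set itself changes with \Dk; this is exactly how recursive feasibility under the data-set switch (``feasibility at $k+1$ under \Dkp is inherited \ldots may need an auxiliary remark'') is resolved. Your alternative definition of $\XNo$ via the optimal terminal state landing in \Xf, together with the heuristic that larger $\lambda$ ``drives the terminal state deeper into \Xf,'' would need to be replaced by, or reduced to, this sublevel-set construction for the argument to close.
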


\begin{proof}[Proof]
  Let $\xF^*_{k|k} = \big\{ \hx^*_{k|k}, \hx^*_{k+1|k}, \ldots,
  \hx^*_{k+N|k} \big\}$ be the predicted state sequence that results
  from applying the optimal input sequence $\uS^*$.
  Then we can write the optimal cost for initial condition $\xk =
  \hx^*_{k|k}$ also as $V_N^*\big(\xk|\Dk\big) =
  V_N\big(\xk,\uF^*_{k|k}|\Dk\big) =
  V_N\big(\xF^*_{k|k},\uF^*_{k|k}|\Dk\big)$.
  Let furthermore 
  $\xF^*_{k+1|k} = \big\{ \hx^*_{k+1|k}, \ldots, \hx^*_{k+N+1|k} \big\}$ 
  and 
  $\uSp^* = \big\{ \hu^*_{k+1|k}, \ldots, \hu^*_{k+N-1|k}, \kf\big(
    \hx^*_{k+N|k} - \xref \big) + \uref \big\}$
  be the respective sequences that start at $k+1$ computed at time $k$,
  where the last state is given by the terminal control law, \ie
  $\hx^*_{k+N+1|k} = \FN\big(\hx^*_{k+N|k}, \kf\big(\hx^*_{k+N|k} -
  \xref \big) + \uref | \Dk \big)$.

  By Assumption~\ref{ass:cost} we have that the stage and terminal
  cost are positive definite.
  Hence, the cost function $V_N\big(\xk,\uS\big)$ is positive definite.
  Furthermore we also obtain
  \begin{align*}
    V_N\big(\xF^*_{k+1|k},\uF^*_{k+1|k}|\Dkp\big) \leq
      V_N\big(\xF^*_{k|k},\uF^*_{k|k}|\Dkp\big) - \ell(\xk,\uk)
  \end{align*}
  by Assumption~\ref{ass:cost}, which is a well known result in standard
  MPC (for the derivation see, for instance, Rawlings and
  Mayne\cite{Rawlings2009} or Rakovic et al\cite{Rakovic2019}).
  Given the update rule in Theorem~\ref{thm:nominal_stability} we have
  \begin{align*}
    V_N\big(\xF^*_{k|k},\uF^*_{k|k}|\Dkp\big) - \ell(\xk,\uk)
     \leq V_N\big(\xF^*_{k|k},\uF^*_{k|k}|\Dk\big) -
      \ell(\xk,\uk) \ .
  \end{align*}
  Combining the previous two equations we obtain
  \begin{align}
    V_N\big(\xF^*_{k+1|k},\uF^*_{k+1|k}|\Dkp\big) \leq
      V_N\big(\xF^*_{k|k},\uF^*_{k|k}|\Dk\big) - \ell(\xk,\uk) \ .
      \label{eq:VN_decrease}
  \end{align}
  Thus, the value function is decreasing even if the prediction model
  changes.
  Hence the value function is a Lyapunov function.

  Regarding the feasible region, we first review a result of Limon et
  al\cite{Limon2006} and show afterwards an extension to the
  present case.
  In particular, Theorem~3 of Limon et al\cite{Limon2006} shows for the 
  nominal and time-invariant case of \eqref{eq:OCP} (\ie constant
  prediction model and no model-plant mismatch) with value function
  $\VN^*(\xk)$ that $\forall \lambda \geq 1$ there exists a feasible
  region \XN such that $\forall \x_0 \in \XN$ the nominal
  closed-loop system $\xkp = \FN\big(\xk, \kMPC(\xk)\big)$ is
  recursively feasible and asymptotically stable.
  The feasible region is characterized by $\XN
  = \left\{ \xk \in \R^\Nx: V^*_N(\xk) \leq N \cdot d + \lambda \cdot
  \nu \right\}$, where $\nu$ is defined in Assumption~\ref{ass:cost} and
  $d$ is a positive constant such that $\ell(\xk,\uk) > d$, $\forall \xk
  \notin \Xf$ and $\forall \uk \in \U$.
  The size of the set \XN increases with $\lambda$.\footnote{
    Note that Theorem~3 of Limon et al\cite{Limon2006} is stated the
    other way round, \ie for each region \XN and for all $\xk \in
    \XN$, there exists a $\lambda \geq 1$ such that the nominal
    closed-loop system is asymptotically stable at \xref.
    }
  
  In this work, the value function $\VN^*(\xk|\Dk)$ changes at
  certain time instances $k$ whenever the data set \Dk changes.
  For this reason we extend the definition of the feasible region to
  $\XNk = \left\{ \xk \in \R^\Nx: V^*_N(\xk|\Dk) \leq N \cdot d + \lambda
  \cdot \nu \right\}$, which then also changes with $k$.
  Due to \eqref{eq:VN_decrease}, the optimal cost is
  decreasing for a particular state sequence $\mathbf{x} = \{ \x_0,
  \x_1, \ldots, \xk, \ldots \}$ with $N d + \lambda \nu \geq
  V_N^*(\x_0|\D_0) \geq V_N^*(\x_1|\D_1) \geq \ldots \geq
  V_N^*(\xk|\Dk)$ and therefore \XNk is increasing along the state
  sequence.
  Thus, if the initial state $\x_0 \in \XNo$, then the subsequent states
  $\xk \in \XNk$ and the \OCP is recursively feasible.
  Hence, the target \xref is asymptotically stable for the nominal
  closed-loop system $\xkp = \FN\big(\xk, \kMPC(\xk|\Dk)\big)$.
\end{proof}

At \xk (with the current output measurement \yk) the optimal
control problem is solved with data set \Dk
and the resulting input $\uk = \kMPC(\xk |\Dk) = \hu^*_{k|k}$ is applied
to the system.
If the next data point $(\wk,\ykp)$ is a candidate for updating the GP, the
previous optimal cost is recomputed using the updated GP.
If the cost does not increase, the GP update becomes effective.
Thus, the update rule in Theorem~\ref{thm:nominal_stability} is
executed additionally after the data selection process of
Rule~\ref{rule:add_data_point}.
This is also reflected in Algorithm~\ref{alg1}.

\begin{remark}[(Conflicting objectives)]
  \label{rem:properties_update_rule}
  \cref{thm:nominal_stability} establishes nominal
  stability despite a changing training data set \Dk.
  In order to determine the new data set candidate \Dskp we use
  Rule~\ref{rule:add_data_point}, whose objective is to refine the
  current prediction model.
  Note that also other rules, which utilize different selection criteria
  for model refinement (e.g. statistical methods, see
  \cref{sec:evolving_GPs}), can be employed.
  Now, one could assume that the additional update rule in
  \cref{thm:nominal_stability} is not necessary because with every new
  data point the prediction model should become more accurate.
  This is, however, not necessarily the case if, for instance, the
  output is corrupted by noise or if outliers are present.
  In both cases, the apparent process behavior differs from the true
  behavior and it cannot be guaranteed that the prediction model becomes
  more accurate with every added data point, nor that the value function
  continues decreasing monotonically.
  Thus, the objective of the update rule in \cref{thm:nominal_stability}
  is to make sure that safety, in the sense of stability and constraint
  satisfaction, is guaranteed.
  This is also illustrated in the simulations section in
  \cref{fig:sim_7_updateRule}.
  However, in the same simulations we also see that data points,
  selected by Rule~\ref{rule:add_data_point} and which carry valuable
  information, are discarded by the update rule of
  \cref{thm:nominal_stability}
  because the decreasing value function condition, and with that
  stability, could not be guaranteed.
  In other words, the two objectives of model refinement (expressed by
  Rule~\ref{rule:add_data_point}) and safety (in the sense of stability,
  expressed by the update rule of \cref{thm:nominal_stability}) are
  conflicting objectives, especially in the case of corrupted
  measurements.
  In this work we prioritize safety, thereby sacrificing a bit of the
  potential of model refinement.
\end{remark}

On the basis of the nominal stability result for the online rGP-MPC
scheme, we now establish robust stability.

\subsubsection{Robust Stability}
\label{sec:robust_stability}

Based on Theorem~\ref{thm:nominal_stability} we show that the real
process controlled by the proposed predictive controller is
input-to-state stable w.r.t. the prediction error \ep.

\begin{theorem}[Input-to-state Stability]
  \label{thm:robust_stability}
  Let $\kMPC(\xk|\Dk)$ be the predictive controller derived from optimal
  control problem \eqref{eq:OCP} satisfying Assumption~\ref{ass:cost}
  and Theorem~\ref{thm:nominal_stability}.
  If
  \begin{itemize}
    \item the nominal model $\FN\big(\xk,\uk|\Dk \big)$ is \uc in \xk
      for all $\xk \in \XNo$, all $\uk \in \U$, and all \Dk during the
      prediction horizon\footnote{Note that this condition does not
        prohibit the change of the nominal model from the current time
        instant $k$ to the next $k+1$.
        }, and
    \item the stage cost function $\ell(\xk,\uk)$ and the terminal cost
      function $\Vf(\xk)$ are \uc in \xk for all $\xk
      \in \XNo$ and all $\uk \in \U$,
  \end{itemize}
  then the target \xref of the closed-loop system $\xkp =
  F\big(\xk,\kMPC(\xk|\Dk),\ep \big)$ is ISS w.r.t. the prediction error
  {\ep} in a robust feasible set $\OOo \subset \XNo$ for a sufficiently
  small $\mu$ with $\abs{\ep} < \mu < \infty$. 
  The smaller $\mu$, the larger the set \OOo.
\end{theorem}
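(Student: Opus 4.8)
The plan is to show that the value function $V_N^*(\xk|\Dk)$ serves as an ISS-Lyapunov function for the perturbed closed-loop system $\xkp = F(\xk,\kMPC(\xk|\Dk),\ep)$, and then invoke a standard ISS-Lyapunov argument (e.g. the discrete-time result of Limon et al\cite{Limon2009}, Jiang--Wang). From Theorem~\ref{thm:nominal_stability} we already have the lower-bound side: $V_N^*(\xk|\Dk) \geq \ell(\xk,\uk) \geq \alpha(\norm{\xE})$ by Assumption~\ref{ass:cost}, and an upper bound $V_N^*(\xk|\Dk) \leq \alpha_V(\norm{\xE})$ for some \Kinfty-function $\alpha_V$ on \XNo, using the terminal-cost bound $\Vf(\xE)\le\alpha_2(\norm{\xE})$ together with uniform continuity of $\ell$ and $\Vf$ to bound the cost of a feasible candidate near the target. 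The heart of the argument is the descent inequality along the \emph{true} (perturbed) trajectory: at time $k$ we apply $\uk = \hu^*_{k|k}$, the plant moves to $\xkp = \FN(\xk,\uk|\Dk) + \VEC d\,\ep$ by \eqref{eq:NARXstateModel}, whereas the nominal successor predicted inside the \OCP is $\hx^*_{k+1|k} = \FN(\xk,\uk|\Dk)$, so the two differ by exactly $\norm{\xkp - \hx^*_{k+1|k}} = \abs{\ep} < \mu$.

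First I would take the shifted input sequence $\uSp^*$ (optimal tail plus terminal control law) as a feasible-candidate input for the \OCP at $k+1$ with initial state $\xkp$. Evaluating $V_N$ at this candidate and comparing it to $V_N$ evaluated at the nominal successor $\hx^*_{k+1|k}$ with the same input sequence, I would use uniform continuity of $\FN$ (to propagate the initial perturbation $\abs{\ep}$ through the $N$ prediction steps, picking up a bounded gain), and uniform continuity of $\ell$ and $\Vf$ on the compact sets \XNo, \U, to bound the difference by $\sigma(\abs{\ep})$ for some \K-function $\sigma$. Combined with the nominal descent \eqref{eq:VN_decrease} from Theorem~\ref{thm:nominal_stability} (which gives $V_N^*(\xkp|\Dkp) \le V_N(\xF^*_{k+1|k},\uSp^*|\Dkp) \le V_N^*(\xk|\Dk) - \ell(\xk,\uk)$ in the unperturbed comparison) and the suboptimality bound $V_N^*(\xkp|\Dkp) \le V_N(\cdot,\uSp^*|\Dkp)$, this yields
\begin{align*}
  V_N^*(\xkp|\Dkp) \leq V_N^*(\xk|\Dk) - \alpha(\norm{\xE}) + \sigma(\abs{\ep}) \ ,
\end{align*}
the desired ISS-Lyapunov decrease. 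Note the changing data set is harmless here: Theorem~\ref{thm:nominal_stability}'s update rule guarantees $V_N^*(\xk|\Dkp) \le V_N^*(\xk|\Dk)$, so all $\Dk$-to-$\Dkp$ transitions only help the inequality.

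Next I would address the robust feasible set. As in the nominal proof, the sublevel set $\XNk = \{\xk : V_N^*(\xk|\Dk) \le N d + \lambda\nu\}$ is the natural domain. Because of the extra term $\sigma(\abs{\ep})$, this set is not quite forward invariant; I would instead define a slightly smaller sublevel set $\OOo = \{\xk : V_N^*(\xk|\Dk) \le N d + \lambda\nu - \rho(\mu)\}$ (for a suitable \K-function $\rho$ with $\rho(\mu)$ small), and show using the decrease inequality that for $\abs{\ep}<\mu$ sufficiently small, trajectories starting in \OOo remain in \XNo, hence remain feasible, and the \OCP solution stays defined; shrinking $\mu$ enlarges \OOo, matching the claimed monotonicity. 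Finally, having an ISS-Lyapunov function with the two \Kinfty\ bounds and the above dissipation inequality on the forward-invariant set \OOo, the standard comparison-function argument produces the \KL-function $\beta$ and \K-function $\gamma$ in the ISS estimate, with $\ek = \ep$.

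**Main obstacle.** The delicate point is the perturbation-propagation step: the nominal model $\FN(\cdot|\Dk)$ changes with $k$, so I must use a uniform-continuity modulus that is valid for \emph{all} data sets $\Dk$ arising during operation — this is exactly why the first hypothesis of the theorem demands uniform continuity in \xk "for all \Dk during the prediction horizon." Establishing that the resulting \K-function $\sigma$ can be taken independent of $k$ (and then choosing $\mu$ small enough that the shrunken sublevel set \OOo is nonempty and forward invariant) is the part that requires care; the rest is a routine assembly of the ISS-Lyapunov machinery.
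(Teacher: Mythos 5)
Your proposal is correct and follows essentially the same route as the paper: both establish the dissipation inequality $V_N^*(\xkp|\Dkp) \le V_N^*(\xk|\Dk) - \alpha_3(\norm{\xk-\xref}) + \sigma(\abs{\ep})$ by propagating the one-step perturbation $\abs{\ep}$ through the horizon via the uniform-continuity moduli of $\FN$, $\ell$, and $\Vf$, handle the changing data set with the update rule of Theorem~\ref{thm:nominal_stability}, and take the robust feasible set to be a sublevel set of the value function strictly inside $\XNo$ following Limon et al. The only cosmetic difference is that the paper first proves uniform continuity of the value function itself on \OOo and then composes it with the continuity of $F$ in \ep, whereas you compare costs of the shifted candidate sequence directly; the two bookkeeping choices yield the same estimate.
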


\begin{proof}
  We first establish the set \OOo and prove recursive feasibility.
  Afterwards we prove the ISS property.
  
  Regarding the nature of \OOo we first review a result of Limon et
  al\cite{Limon2009} and then extend it to our case.
  Proposition~1 (C2) in\cite{Limon2009} shows for the time-invariant
  case of \eqref{eq:OCP} (\ie for a non-changing prediction model)
  that the closed-loop $\xkp = F(\xk,\kMPC(\xk|\Dk),\ep)$ is robustly
  feasible for all \xk in a robust feasible set \OO.
  In particular, it is proven that if $\abs{\ep} < \mu$ with a
  sufficiently small $\mu$, there exists a $r$ such that $\OO := \{ \xk
  \in \R^\Nx : V_N^*(\xk) \leq r \} \subset \XN$ is a compact and
  positive invariant set (where \XN is the feasible set of the OCP with
  $\ep \equiv 0$) and for all $\xk \in \OO$ the resulting predicted
  state sequence remains in \OO.
  Therefore the state constraints \X do not become active.
  Hence, for all $\x_0 \in \OO$ the MPC scheme is recursively
  feasible and the constraints are robustly satisfied.
  Furthermore, larger values of $\lambda$ lead to a larger region \OO.
  
  In the definition of \OO in \cite{Limon2009} the value function
  $\VN^*(\xk)$ is time-invariant, whereas in this work $\VN^*(\xk|\Dk)$
  depends on the changing data set \Dk.
  For this reason we extend the definition of the robust feasible region
  to $\OOk = \left\{ \xk \in \R^\Nx: V^*_N(\xk|\Dk) \leq r \right\}
  \subset \XNk$, which then also changes with $k$.
  In order for $\OOk \subset \XNk$ to hold we require $0 < r < N
  \cdot d + \lambda \cdot \nu$ because $\XNk = \left\{ \xk \in \R^\Nx:
  V^*_N(\xk|\Dk) \leq N \cdot d + \lambda \cdot \nu \right\}$.
  Thereby, $N \cdot d + \lambda \cdot \nu$ establishes an upper bound
  for $r$.
  Like the feasible set \XNk (see the proof to
  \cref{thm:nominal_stability}), also \OOk increases with $\lambda$ and
  in particular with $k$ along a particular state sequence $\mathbf{x} =
  \{ \x_0, \x_1, \ldots \}$.
  Therefore, if the initial state $\x_0 \in \OOo$, then the subsequent
  states $\xk \in \OOk$ and the state constraints do not become active.
  The existence of \OOo is established by Proposition~1 (C2)
  in\cite{Limon2009} (as outlined above) and therefore, if $\x_0 \in
  \OOo$ then \eqref{eq:OCP} is recursively feasible and the constraints
  are robustly satisfied.
  
  Now we show that the closed-loop system $\xkp =
  F(\xk,\kMPC(\xk|\Dk),\ep)$ is ISS w.r.t. the prediction
  error \ep.
  To this end we start by showing that the cost function
  $\VN(\xk,\uS)$ is \uc in \xk.
  Since the nominal model $\FN(\xk,\uk|\Dk)$ is \uc in \xk
  during the prediction horizon, there exists a \K-function
  $\sigma_x(\cdot)$ such that $\norm{\FN(\xk,\uk|\Dk) -
  \FN(\zk,\uk|\Dk)} \leq \sigma_x(\norm{\xk-\zk})$ for all $\xk,\zk
  \in \XNo$, all $\uk \in \U$, and for a given data set \Dk.
  In accordance with Lemma~2 in \cite{Limon2009}, the predicted state
  evolution then satisfies $\norm{\hx_{k+i|k} - \hat\z_{k+i|k}} \leq
  \sigma_x^i(\norm{\xk - \zk})$ for $i \in \CAL I_{0:N-1}$.
  Furthermore, since the stage and terminal cost are \uc in \xk, there
  exists a couple of \K-functions $\sigma_\ell(\cdot),
  \sigma_{\Vf}(\cdot)$ such that $\norm{\ell(\xk,\uk) - \ell(\zk,\uk)}
  \leq \sigma_\ell(\norm{\xk-\zk})$ and $\norm{\Vf(\xk) - \Vf(\zk)} \leq
  \sigma_{\Vf}(\norm{\xk-\zk})$ for all $\xk,\zk \in \XNo$ and all $u
  \in \U$.
  Combining these properties we obtain
  \begin{align*}
    \norm{\VN(\xk,\uS) - \VN(\zk,\uS)}
    &\leq \sum_{i=0}^{N-1} \norm{\ell(\hx_{k+i|k},\hu_{k+i|k}) -
    \ell(\hat{\z}_{k+i|k},\hu_{k+i|k})} 
      + \norm{\Vf(\hx_{k+N|k}) - \Vf(\hat{\z}_{k+N|k})} \\
    &\leq \sum_{i=0}^{N-1} \sigma_\ell \circ\, \sigma_x^i(\norm{\xk-\zk})
      + \sigma_{\Vf} \circ\, \sigma_x^N(\norm{\xk-\zk})
      =: \sigma_V(\norm{\xk-\zk}) \ ,
  \end{align*}
  where $\circ$ denotes the concatenation of functions (\eg $\sigma_1
  \circ\, \sigma_2(x) = \sigma_1(\sigma_2(x))$) and $\sigma_V(\cdot)$ is
  a \K-function.
  Therefore the cost function is \uc in \xk for all $\xk \in \XNo$ and
  all \uS.
  
  As shown, for every $\xk \in \OOo$ the state constraints do not
  become active.
  Thus, the optimal solution $\uS^*$ of \eqref{eq:OCP} is feasible for
  every $\x_0 \in \OOo$ and we obtain
  \begin{align*}
    \norm{\VN^*(\xk|\Dk) - \VN^*(\zk|\Dk)} = \norm{\VN(\xk,\uS^*) -
    \VN(\zk,\uS^*)} \leq \sigma_V(\norm{\xk-\zk}) \ .
  \end{align*}
  Therefore, the value function $\VN^*(\xk|\Dk)$ is also \uc in \xk for
  all $\xk \in \OOo$ and a given data set \Dk.
  
  At last we show that the value function is a ISS-Lyapunov function.
  Since $\VN^*(\xk|\Dk)$ is a Lyapunov function for the nominal system
  (\cref{thm:nominal_stability}) there exists \Kinfty-functions
  $\alpha_1(\cdot), \alpha_2(\cdot), \alpha_3(\cdot)$, such that
  $\alpha_1(\norm{\xk}) \leq \VN^*(\xk|\Dk) \leq \alpha_2(\norm{\xk})$
  and $\VN^*(\xkp|\Dkp) - \VN^*(\xk|\Dk) \leq
  -\alpha_3(\norm{\xk})$.
  Moreover, from \eqref{eq:NARXstateModel} we have that $F(\xk,\uk,\ep)$
  is affine in \ep and is therefore \uc in \ep.
  Then, there exists a \K-function $\sigma_e(\cdot)$ such that $\norm{
    F(\xk,\uk,e_1) - F(\xk,\uk,e_2)
  } \leq \sigma_e(\abs{e_1-e_2})$ for all $\xk \in \XNo$, all $\uk
  \in \U$, and all $\abs{\ep} \leq \mu$.
  From these facts, it can be inferred that
  \begin{align*}
    \VN^*(\xkp|\Dkp) - \VN^*(\xk|\Dk) 
      &= \VN^*\big( F(\xk,\kMPC(\xk),\ep)|\Dkp \big) - \VN^*(\xk|\Dk) \\
      &= \VN^*\big( F(\xk,\kMPC(\xk),\ep)|\Dkp \big) -
        \VN^*\big( F(\xk,\kMPC(\xk),0)|\Dkp \big) \\
      & \quad + \VN^*\big( F(\xk,\kMPC(\xk),0)|\Dkp \big) -
        \VN^*(\xk|\Dk) \\
      &\leq \norm{\VN^*\big( F(\xk,\kMPC(\xk),\ep)|\Dkp \big) -
        \VN^*\big( F(\xk,\kMPC(\xk),0)|\Dkp \big)} - \alpha_3(\norm{\xk}) \\
      &\leq \sigma_V\big( \norm{F(\xk,\kMPC(\xk),\ep) -
        F(\xk,\kMPC(\xk),0)} \big) - \alpha_3(\norm{\xk}) \\
      &\leq \sigma_V \circ\, \sigma_e(\abs{\ep}) - \alpha_3(\norm{\xk})
      \ .
  \end{align*}
  Hence, $\VN^*(\xk|\Dk)$ is a ISS-Lyapunov function and the closed-loop
  system $\xkp = F\big(\xk,\kMPC(\xk|\Dk),\ep\big)$ is ISS w.r.t. \ep
  for all $\x_0 \in \OOo$.  
\end{proof}


\begin{remark}[(Differences in soft and hard output constraints)]
  In the case of soft constraints \Ys, the proposed controller ensures
  robust stability and constraint satisfaction for all initial states
  that lie in the feasible region \XNo of the optimal control problem.
  In the case of hard constraints \Yh, the proposed controller ensures
  robust stability and constraint satisfaction for all initial states
  in a robust feasible set $\OOo \subset \XNo$ where the constraints
  are not active.
  Thus, from a practical point of view, if in the soft constraints case
  the initial state \xk leads to a feasible solution, we then have
  $\xk \in \XNo$ and the above guarantees hold.
  If in the hard constraints case the initial state \xk leads to a
  feasible solution, then we also have $\xk \in \XNo$.
  However, in that case, one cannot be sure if also $\xk \in \OOo$ is
  satisfied.
  If $\xk \notin \OOo$, then feasibility might be lost at one point.
  Thus, for safety critical applications the set \OOo would 
  required to be known in order to check $\xk \in \OOo$, which is
  challenging because \OOo (as well as \XNo) can in general not be
  computed but has to be estimated via simulations.\cite{Rawlings2009}
  However, this issue could be circumvented if the hard constraints
  were tightened\cite{Limon2009}, thereby enlarging \OOo.
\end{remark}

\begin{remark}
  Notice that the ISS property is based on the uniform continuity of the
  optimal cost function and this does not depend on the size of the
  error signal.
  Hence, even if \ep is larger than $\mu$ for a short period of time in
  which we assume that the feasibility of the \OCP is not lost, \ie \xk
  remains in \XNk and ends in \OOk, then the closed-loop ISS property
  and constraint satisfaction will still hold.
\end{remark}

\begin{remark}[(Generalization)]
  Theorems~\ref{thm:nominal_stability} and \ref{thm:robust_stability}
  are independent of the control input dimension and also
  hold for general errors \e independent of the concrete
  structure of the state \xk, \ie whether \xk is a vector comprised of
  NARX states or of physical states.
  Thus, the theorems also include the multi-input multi-output case.
  In addition, as long as the presented assumptions are satisfied, in
  particular the update rule in Theorem~\ref{thm:nominal_stability},
  the stability results also hold for the case of
  online hyperparameter optimization and even further, for general
  prediction models $\FN(\xk,\uk|\Dk)$ that are updated online, \ie the
  stability guarantees are not confined to the use of GP prediction
  models.
\end{remark}

A necessary condition in Theorem~\ref{thm:robust_stability} is that the
nominal model $\FN(\xk,\uk|\Dk)$ is uniformly continuous in \xk for all
$\xk \in \XNo$, all $\uk \in \U$, and all \Dk during the prediction
horizon.
In the case of \GPs, this can be guaranteed by the following proposition.

\begin{proposition}[GP Uniform Continuity\cite{Maiworm2018b}]
  The nominal model \eqref{eq:predictionModel} is uniformly continuous
  in \xk if $\fN(\xk,\uk) = \postM(\wk|\D)$ is uniformly continuous in
  \xk.
  Since the prior mean $m(\wk)$ is added to $\postM(\wk|\D)$,
  the prior mean has to be \uc in \xk\footnote{The prior mean is usually
  specified by the user and often set to zero. Thus uniform continuity
  of $m(\w)$ is not an issue.}.
  Then, one way to ensure that $\postM(\wk|\D)$ is \uc in \xk, is to
  employ continuously differentiable kernels (e.g. the squared
  exponential covariance function, the Mat\'{e}rn class covariance
  function with appropriate hyperparameters, or the rational quadratic
  covariance function).
  In that case the process is mean square differentiable
  \cite{Abrahamsen1997, Rasmussen2006}, \ie the posterior mean function
  is differentiable and therefore also uniformly
  continuous\footnote{Continuous differentiability is a stronger
  assumption than uniform continuity.}.
  \label{prop:GP_uniform_continuity}
\end{proposition}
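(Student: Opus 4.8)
The plan is to prove the proposition by three elementary reductions, the only substantive ingredient being a classical smoothness fact about Gaussian process regression that I would simply cite. First I would reduce uniform continuity of $\FN$ to that of $\postM(\wk|\D)$. Looking at \eqref{eq:predictionModel}, for a fixed data set $\D$ and fixed input $\uk$ the map $\xk\mapsto\FN(\xk,\uk|\D)$ stacks the scalar $\postM(\wk|\D)$ on top of the coordinates $y_k,\dots,y_{k+1-\My},u_k,\dots,u_{k+1-\Mu}$; each of the latter components is either a coordinate projection of $\xk$ (hence $1$-Lipschitz in $\xk$) or equals the fixed $\uk$ (hence constant in $\xk$). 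Since a finite stack of \uc scalar maps is \uc — given $\varepsilon>0$, take $\delta$ to be the minimum over components of the componentwise modulus evaluated at $\varepsilon/\sqrt{\Nx}$ — uniform continuity of $\xk\mapsto\fN(\xk,\uk)=\postM(\wk|\D)$ immediately yields that of $\FN$. Because the prior mean $m(\wk)$ enters $\postM$ additively in \eqref{eq:postM}, it must itself be \uc in $\xk$; this is trivial for the usual constant prior.

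Second, I would reduce uniform continuity of $\postM(\cdot|\D)$ to uniform continuity of the kernel sections $\xk\mapsto\cov(\wk,\w_i)$. For a fixed $\D$, setting $\alphaa=\KKi(\zb-m(\wb))$, equation \eqref{eq:postM} reads $\postM(\wk|\D)=m(\wk)+\sum_{i=1}^{n}\alpha_i\,\cov(\wk,\w_i)$, a finite linear combination with constant coefficients. Hence, if $m$ and each $\xk\mapsto\cov(\wk,\w_i)$ are \uc in $\xk$, then so is $\postM(\cdot|\D)$, again by the same finite-sum modulus argument.

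Third — and this is where the only real content lies — I would argue that a continuously differentiable covariance function makes each kernel section, and therefore the posterior mean, uniformly continuous. If $\cov(\cdot,\cdot)$ is $C^1$, then $\wk\mapsto\cov(\wk,\w_i)$ is $C^1$, so its restriction to the $\xk$-block of $\wk$ is $C^1$; equivalently, a kernel that is (twice along the diagonal) differentiable renders the GP mean square differentiable, so that the posterior mean itself is differentiable \cite{Abrahamsen1997,Rasmussen2006}. Now, since \Yh is compact, the NARX constraint set \X and the feasible set $\XNo\subseteq\X$ are compact, and \U is compact; on the compact set $\XNo\times\U$ a $C^1$ function has bounded gradient, hence is Lipschitz, so $\xk\mapsto\postM(\wk|\D)$ is Lipschitz in $\xk$ with a constant that can be chosen uniformly over $\uk\in\U$. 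In particular it is \uc in $\xk$, which is exactly the hypothesis required in \cref{thm:robust_stability}, and it even supplies the concrete modulus $\sigma_x(r)=Lr$ used there. The examples named in the statement — the squared exponential kernel \eqref{eq:covSEard}, the rational quadratic kernel, and Mat\'ern kernels with sufficiently large smoothness parameter — all satisfy this $C^1$ requirement.

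The main obstacle is the third step: transferring smoothness from the kernel to the \emph{posterior} mean (not merely to sample paths). I expect to handle it via the elementary finite-sum observation of the second step together with the compactness upgrade from differentiability to a Lipschitz/uniform modulus, invoking the mean square differentiability literature only for the less transparent kernels. The first two steps are routine modulus-of-continuity bookkeeping, and the whole argument hinges on the data set $\D$ being held fixed over the prediction horizon, which is the case here.
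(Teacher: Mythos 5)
Your proposal is correct, and it follows the same overall route as the paper's own justification (smooth kernel $\Rightarrow$ smooth posterior mean $\Rightarrow$ uniform continuity), but it fills in two details that the paper leaves implicit, and one of them genuinely matters. The paper's argument goes through mean square differentiability of the process and then asserts that the posterior mean is ``differentiable and therefore also uniformly continuous''; as stated, that last implication is false on an unbounded domain (a $C^1$ function need not be \uc unless its derivative is bounded). You close this gap explicitly by observing that \Yh compact implies \X and hence $\XNo$ compact, so the $C^1$ posterior mean has bounded gradient there and is Lipschitz, uniformly over $\uk\in\U$. Your other addition --- writing $\postM(\wk|\D)=m(\wk)+\sum_i\alpha_i\,\cov(\wk,\w_i)$ with the coefficients $\alphaa=\KKi(\zb-m(\wb))$ frozen for fixed \D, so that smoothness of the posterior mean is just smoothness of finitely many kernel sections --- is more elementary than invoking mean square differentiability of the process, and it is the cleaner way to transfer regularity to the \emph{posterior mean} specifically (mean square differentiability is a statement about the random process, not directly about $\postM$). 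The reduction from \FN to $\postM$ via coordinate projections is routine and matches the proposition's first sentence. The one caveat worth flagging, which neither you nor the paper resolves but which you at least acknowledge by noting that \D is held fixed over the horizon: your Lipschitz constant depends on $\alphaa$ and hence on \Dk, so the modulus $\sigma_x$ used in \cref{thm:robust_stability} is a priori data-set--dependent; making it uniform over all admissible \Dk would require a bound on $\norm{\alphaa}$ (e.g.\ via the cap $M$ on the number of training points and boundedness of the outputs), which is consistent with how the paper phrases the condition (``for a given data set \Dk'') but is not proved there either.
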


\begin{remark}
  Although not required for \cref{thm:robust_stability}, note that
  uniform continuity of the process $F(\xk,\uk,\ep) = \FN(\xk, \uk |
  \Dk) + \VEC d \ep$ in \xk is ensured if $\FN(\xk,\uk|\Dk)$ is \uc in
  \xk, which can be established via \cref{prop:GP_uniform_continuity}.
\end{remark}

\paragraph{Resulting Prediction Errors}

We finish this section with a discussion on the prediction error $\ep =
\yk - \hyk$.
According to Theorem~\ref{thm:robust_stability}, the smaller the error
bound $\abs{\ep} \leq \mu$, the larger the feasible set \OOo.
Since the noise $\epsilon$ (affecting \yk) is in practice bounded by a
finite $\bar\epsilon$, the
error bound $\mu$ is finite if \hyk is finite (given of course that the
original process \yk is finite), which translates to the necessity that
the GP posterior mean \eqref{eq:postM} is bounded.

From a theoretical point of view, such a bound exists under certain
conditions.
Note that the posterior mean (with zero prior mean $m(\w)=0$) can also
be expressed via $\postM(\w^*|\D) = \sum_{i=1}^n \alpha_i
k(\w_i,\w^*)$, with $\w_i \in \wb$, as a linear combination of $n$
kernel functions\cite{Rasmussen2006} that determines a
\emph{reproducing kernel Hilbert space} (RKHS).
As shown in Steinwart and Christmann\cite{Steinwart2008}, a bound
in the RKHS exists if \emph{universal} kernels are employed.
One such kernel is, for instance, the squared
exponential covariance function\footnote{The
squared exponential covariance function is sometimes also denoted as
\emph{Gaussian radial basis function}.
Especially in the field of neural networks or support vector
machines.
}
\eqref{eq:covSEard} for which the existence of a bound had already
been shown by Park and Sandberg\cite{Park1991}. 
De Nicolao and Pillonetto\cite{Nicolao2008} have presented a very
similar result when modeling the impulse response via a spline kernel.
The result has also been used in Pillonetto and
Chiuso\cite{Pillonetto2009}.
Furthermore, Engel\cite{Engel2005} and Srinivas et
al\cite{Srinivas2012} provide ways to explicitly compute the bound,
though only with high probability.
%

In practice, however, $\postM(\w^*|\D)$ will generally be bounded
assuming that the employed GP prior is well chosen and sufficiently
informative training data \D is used.
Thus, the actual bound depends on the designer's choices regarding the
particular employed GP model and the involved tuning parameters.
Among these, in particular the thresholds for the prediction error and
posterior variance for the presented rGP approach.

\section{Simulations}
\label{sec:simulations}

In this section, we provide simulation results for the presented rGP-MPC
scheme and consider a continuous stirred-tank reactor as simulation
case study.
We present the model equations, the training data set generation, and 
the terminal components for the MPC based on the linearized GP
posterior mean function.
The closed-loop simulations involve investigations regarding the tuning
parameters of the rGP-MPC, the influence of different initial training
data sets, as well as comparisons with other MPC controllers.

\subsection{Continuous Stirred-tank Reactor}

As exemplary case study, we consider the continuous stirred-tank reactor
(CSTR), where a substrate $A$ is converted into product $B$.
\cite{Seborg1989}
The following set of differential equations describes the
reactor dynamics:
\begin{subequations}
\begin{align}
  \dot C_A(t) &= \frac{q_0}{V} \big( C_{A \text f} - C_A(t) \big)
                 - k_0 \exp \left( \frac{-E}{R T(t)} \right) C_A(t) \\
  \dot T(t) &= \frac{q_0}{V} \big( T_\text f - T(t) \big)
               - \frac{\Delta H_\text r k_0}{\rho C_\text p}
               \exp\left( \frac{-E}{R T(t)} \right) C_A(t) \nonumber \\
            &\quad + \frac{U A}{V \rho C_p} \big( T_\text c(t) - T(t)
            \big) \\
  \dot T_\text c(t) &= \frac{T_\text{r}(t) - T_\text c(t)}{\tau} 
\end{align}
  \label{eq:CSTR}
\end{subequations}
The coolant temperature reference \unit[$T_\text{r}$]{(K)} is the
input and the concentration \unit[$C_A$]{(mol/l)} the
output, \ie $u = T_\text{r}$ and $y = C_A$.
The tank and coolant temperatures are $T$ and $T_\text c$, respectively.
The model parameters are given in Tab.~\ref{tab:CSTRparameters}.

\subsection{Training Data Sets}

A raw data set \Draw (depicted in Fig.~\ref{fig:dataSet}) is generated
using the plant \eqref{eq:CSTR}.
The data points $(z_i,\w_i)$ consist of values of $(\ykp, \yk, \ldots,
y_{k-\My},$ $u_k, \ldots, u_{k-\Mu})$, where $z = \ykp$ is going be the
GP output and $ \w = (\yk, \ldots, y_{k-\My},$ $u_k, \ldots, u_{k-\Mu})$
its corresponding regressor.
Based on this data, we generate the three training data sets \DO, \Dref,
and \Dcomb.
The set \DO is a local subset around the initial point $y_0 =
C_A = \unit[0.6]{mol/l}$. 
The associated input is $u_0 = T_\text{r} = \unit[353.5]{K}$.
The set \Dref is a local subset around the target reference point
$\yref = C_A = \unit[0.439]{mol/l}$ with associated input
$u_\text{ref} = T_\text{r} = \unit[356]{K}$.
The set $\Dcomb = \DO \cup \Dref$ is the union of the two sets.

\begin{figure}[htpb]
  \centering
  \includegraphics[width=0.6\linewidth]{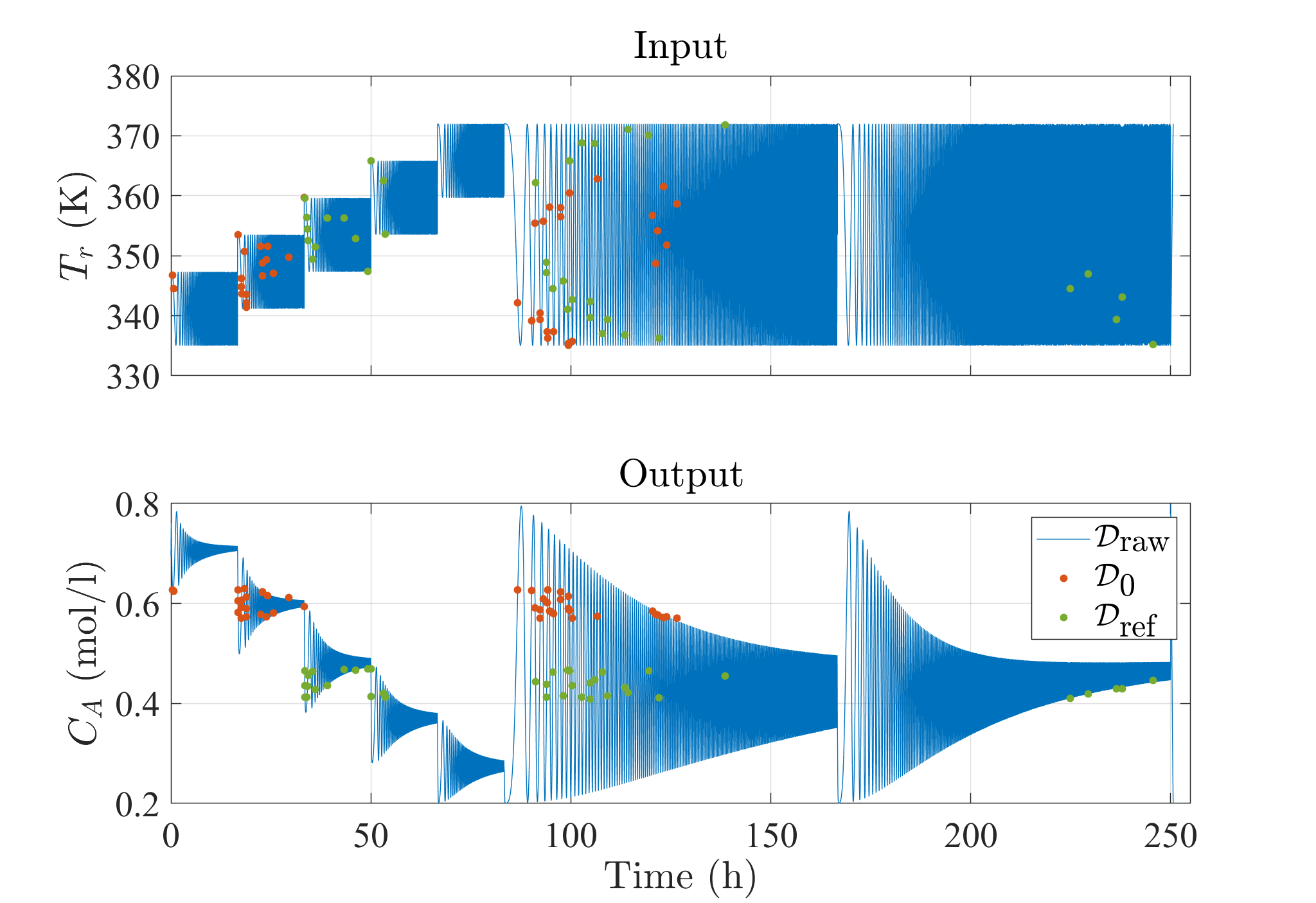}
  \caption{Training data sets:
  The raw data set \Draw was generated by chirp signals on the input. 
  The sets \DO and \Dref are local neighborhoods of the initial
  point $y_0$ and the reference point \yref and their associated inputs.
  }
  \label{fig:dataSet}
\end{figure}

The sets \DO and \Dref are generated by selecting first all points
$z = \ykp$ (and their corresponding \w)
that are located within a local neighborhood of the respective
set-points and second, by reducing the number of points via exclusion
of those that add only little information.
For a given data point $(z_i,\w_i)$, all following $(z_j,\w_j), j > i$,
are removed, for which $\norm{\w_i-\w_j} < \bar w$ with a chosen
threshold $\bar w$. 
As a result, the sets are less dense but still contain enough
informative data points.
The thresholds for \DO and \Dref are chosen such that both sets contain
approximately 40 data points.

\begin{remark}
  All input and output values are given in the original units of the
  system \eqref{eq:CSTR}.
  However, it is beneficial for the modeling process with the GP
  to normalize the input-output data to the interval $[0, 1]$.
\end{remark}

\subsection{GP Prediction Model}

For the GP prior we employ a constant mean function with constant
$c$.
Since the underlying process equations are smooth and to obtain the
universal approximation property (see
Section~\ref{sec:robust_stability}) we employ the covariance function
\eqref{eq:covSEard} with regressor $\w = [\yk, y_{k-1}, y_{k-2}, u_k]$.
According to \eqref{eq:NARXstate}, the NARX state is then $\xk = [\yk,
y_{k-1}, y_{k-2}]\T$.
The hyperparameters are $\thetaa = \{ c, l_1, l_2, l_3, l_4, \Sf \}$ and
are computed offline via maximization
of \eqref{eq:loglike} for each of the three data sets \DO, \Dref, and
\Dcomb.
We obtain three sets of hyperparameters respectively
(Tab.~\ref{tab:hyperparameters}) and with that three different GP
prediction models that use the same prior but different training data
sets and hyperparameters.
The cross validation results of these different GP models are shown in
Fig.~\ref{fig:GP_validation_error}, where we select test points
throughout the regions of the respective training data sets.
Test points are chosen such that they are not part of \DO,
\Dref, or \Dcomb.
As can be seen, appropriate GP predictions are achieved with prediction
error $\ep < \epU = \unitfrac[0.02]{mol}{l}$ and posterior standard
deviation $\sigma_+ < \sigmaU = \unitfrac[5 \cdot 10^{-3}]{mol}{l}$ for
all three GPs.

\begin{figure}[htpb]
  \centering
  \includegraphics[width=0.6\linewidth]{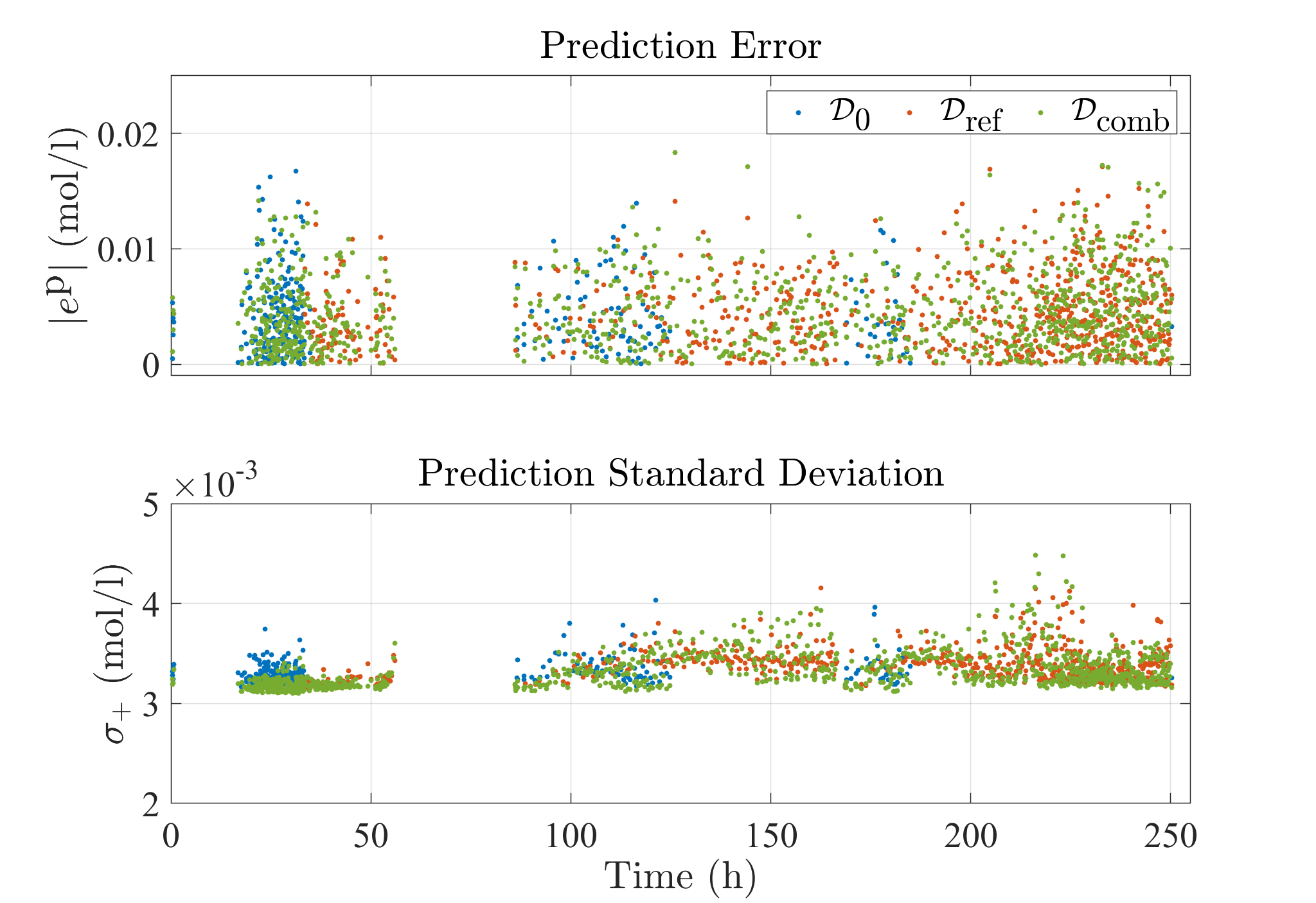}
  \caption{Cross validation results: Top, the prediction error \ep
  \eqref{eq:ep} is depicted.
  Bottom, the posterior standard deviation $\sigma_+(\w) =
  \sqrt{\postV(\w)}$.
  }
  \label{fig:GP_validation_error}
\end{figure}

\subsection{Optimal Control Problem}

The continuous-time model \eqref{eq:CSTR} is discretized with Euler's method
and a sampling time of $T_\text s = \unit[0.5]{min}$.
The input constraints are $\U = \{ \unit[335]{K} \leq T_\text r \leq
\unit[372]{K} \}$, the output constraints $\Yh = \{ \unit[0.35]{mol/L}
\leq C_A \leq \unit[0.65]{mol/l} \}$.
We add measurement noise
$\epsilon \sim \CAL N(0,\Sn)$ to the output
data with $\Sn = 0.003^2$, which we furthermore bound\footnote{
  According to the considered system class we add Gaussian noise with
  bounded support.
  Since \GPs are based on Gaussian noise with unbounded support, there
  is a small difference, which could be accounted for by GP
  warping\cite{Snelson2004}.
  However, due to the large bound of four standard deviations, the
  difference is so small that the following simulation results are equal
  to the ones with unbounded noise.
} 
by $\pm 4\sigma_\text{n}$.
The employed quadratic stage cost is given by
\begin{align*}
  \ls(\xk,u_k) = \norm{\xk - \xref}^2_{\Q} + \norm{u_k - u_\text{ref}}^2_R
\end{align*}
with $\Q = \text{diag}(100,0,0)$, and $R = 5$.
The prediction horizon is set to $N = 5$.
The resulting \OCP is solved in MATLAB using \texttt{fmincon}.

\subsection{Terminal Controller and Cost Function}
\label{sec:terminal_ingredients}

The terminal controller $\kf(\cdot)$ and cost function $\Vf(\cdot)$ can
be determined arbitrarily, as long as the assumptions in
Section~\ref{sec:stability} are satisfied.
We choose the terminal controller as $\kf(\x) = \VEC k\T(\x-\xref)
+ u_\text{ref}$ and the terminal cost function as $\Vf(\x) =
\norm{\x-\xref}^2_{\P}$, where $\k \in \R^3$ and $\P \in \R^{3\times3}$
are computed using the linearization of the prediction model
\eqref{eq:predictionModel} with the GP model, based on the training data
set \Dref obtained near the reference \xref.

The linearization of the nominal NARX model $\xkp = \FN(\xk,u_k)$ with
$\xk = [\yk, y_{k-1}, y_{k-2}]$ takes the form
\begin{align*}
  \Matrix{ y_{k+1} \\ y_k \\ y_{k-1} }
  =
  \underbrace{
    \Matrix{ a_{11} & a_{12} & a_{13} \\
              1     &    0   &   0    \\
              0     &    1   &   0 
           }
         }_{\A}
  \Matrix{ y_k \\ y_{k-1} \\ y_{k-2} }
  +
  \underbrace{
    \Matrix{ b_1 \\ 0 \\ 0}
  }_{\b}
  u_k \ .
\end{align*}
As the next output is computed using the GP, \ie $\ykp =
\postM(\wk)$, the parameters $a_{11}, a_{12}, a_{13}, b_1$ can be
determined using the posterior mean gradient derived in the appendix,
Sec.~\ref{sec:posterior_mean_gradient}.
In particular we have $[a_{11}, a_{12}, a_{13}, b_1] = \nabla
\postM(\w_\text{ref})\T$ with $\w_\text{ref} = [\yref, \yref, \yref,
u_\text{ref}]$.
The resulting linear model becomes
\begin{align}
  \xkp = 
    \Matrix{
      0.162 & 0.005 & -0.012 \\
      1     & 0     & 0 \\
      0     & 1     & 0 \\
    }
    \xk + 
    \Matrix{ -0.034 \\ 0 \\ 0
    }
    u_k \ .
  \label{eq:linearModel}
\end{align}

We define the feedback vector as $\k = \P \s$ with $\s \in \R^3$, $\P =
\P\T > 0$, and $\G = \P\inv$.
We furthermore define the state constraint set $\X = \Yh \times \Yh \times
\Yh$ and reformulate \X and \U as polyhedral sets of the form
$\X = \left\{ \x \in \R^\Nx : \q_i\T \x \leq r_i, \ i = 1,\ldots,
n_\X \right\}$ and $\U = \left\{ u \in \R^\Nu : v_l\T u \leq t_l, \ l =
1, \ldots, n_\U \right\}$, where $n_\X$ and $n_\U$ are
the respective numbers of inequalities.
Then, we compute \s and \P offline by solving the semidefinite
optimization problem\cite{Maiworm2015}
\begin{align}
\begin{aligned}
  \underset{\G,\s}{\max} ~& \log\big(\det (\G) \big) \\
    \text{s.t.~} &\G = \G\T > 0  \\
        & \Matrix{\G & \left( \A \G + \b \s\T \right)\T \\
                      \left( \A \G + \b \s\T \right) & \G
                  } \geq 0 \\
        & \Matrix{\G         & (\G \q_i)  \\
                  (\G \q_i)\T & r_i^2
                  } 
    \geq 0, \quad\forall i \in \left\{ 1,\dotsc, n_\X \right\} \\ 
        & \Matrix{\G           & (\s v_l)  \\
                  (\s v_l)\T & t_l^2
                  } 
    \geq 0, \quad\forall l \in \left\{ 1,\dotsc, n_\U \right\} 
\label{eq:SDP}
\end{aligned}
\end{align}
and obtain
\begin{align*}
  \k\T = 
    \Matrix{ 1.745 & 0.082 & -0.001 } 
   \ \text{ and } \ 
  \P =
    \Matrix{
      16.38  & -0.556 & -0.066 \\
      -0.556 & 16.32  & -0.554 \\
      -0.066 & -0.554 & 16.30 \\
    }
    \ .
\end{align*}

The optimization problem \eqref{eq:SDP}\cite{Maiworm2015} results
from using the Schur complement in combination with the discrete-time
Lyapunov equation and the support function concept of
closed convex sets.
The resulting \s and \P are such that the closed-loop linearized system
is asymptotically stable in $\Xf = \left\{ \x \in \R^\Nx : \Vf(\x) =
\norm{\x-\xref}^2_{\P} \leq 1 \right\} \subseteq \X$ and $\k \Xf
\subseteq \U$.

\begin{remark}
  It has been proven in the literature that the quadratic Lyapunov
  function holds for the nonlinear system in a certain neighborhood of
  the equilibrium point.
  The terminal region definition $\Xf = \left\{ \x \in
  \R^\Nx : \Vf(\x) = \norm{\x-\xref}^2_{\P} \leq \nu \right\}$,
  parameterized with $\nu$, could be used to characterize this
  neighborhood.
  Then we would need to take the nonlinear remainder term into account
  to calculate a particular value for $\nu$, which would require
  the solution of a global optimization problem.
  Such a problem could be solved by using scenarios or a Monte
  Carlo approach.
  However, since the optimal control problem does not need the
  terminal region constraint, $\nu$ is not required.
\end{remark}

\subsection{Simulation Results}

First, we simulate the set-point change from $(u_0,y_0)$ to
$(u_\text{ref},\yref)$ and compare the closed-loop results of the
rGP-MPC, a batch GP approach (bGP-MPC) that uses a fixed training data
set, and an output feedback MPC scheme (oMPC) that uses the
model equations \eqref{eq:CSTR} and acts as a performance bound.
We evaluate the performance for the three cases, where \DO, \Dref,
and \Dcomb are used as initial training data sets.
The bGP and rGP are initialized with the same initial training data 
and hyperparameters but the rGP updates its training data set during
operation.
We set $\epU = \sigmaU = 0$ such that every data point is considered
as a candidate for inclusion\footnote{Not every data point is added
  due to the update rule of Theorem~\ref{thm:nominal_stability}.} 
with no upper limit on the number of data points.
Hence, no points are removed.
Due to the stochastic nature of the noise component, we simulate
each case $\Nsim = 50$ times.
The results are depicted in Fig.~\ref{fig:MPC_comparison_D0} to
Fig.~\ref{fig:MPC_comparison_Dcomb}.
To quantify the performance we employ the measure
\begin{align}
  \bar V = \frac{1}{\Nsim} \sum_{j=0}^{\Nsim} \sum_{k=0}^{\Nstep}
    \ell \big( \x_k^j, \uk^j \big) \,
  \label{eq:performance}
\end{align}
which averages the stage costs of the resulting state and input sequences
over all time steps $k \in \{0, 1, \ldots, \Nstep\}$, as well
as the individual simulations $j \in \{1, 2, \ldots, \Nsim\}$.
The resulting $\bar V$ values are presented in Tab.~\ref{tab:performance}.

\begin{figure}[htpb]
  \centering
  \includegraphics[width=0.6\linewidth]{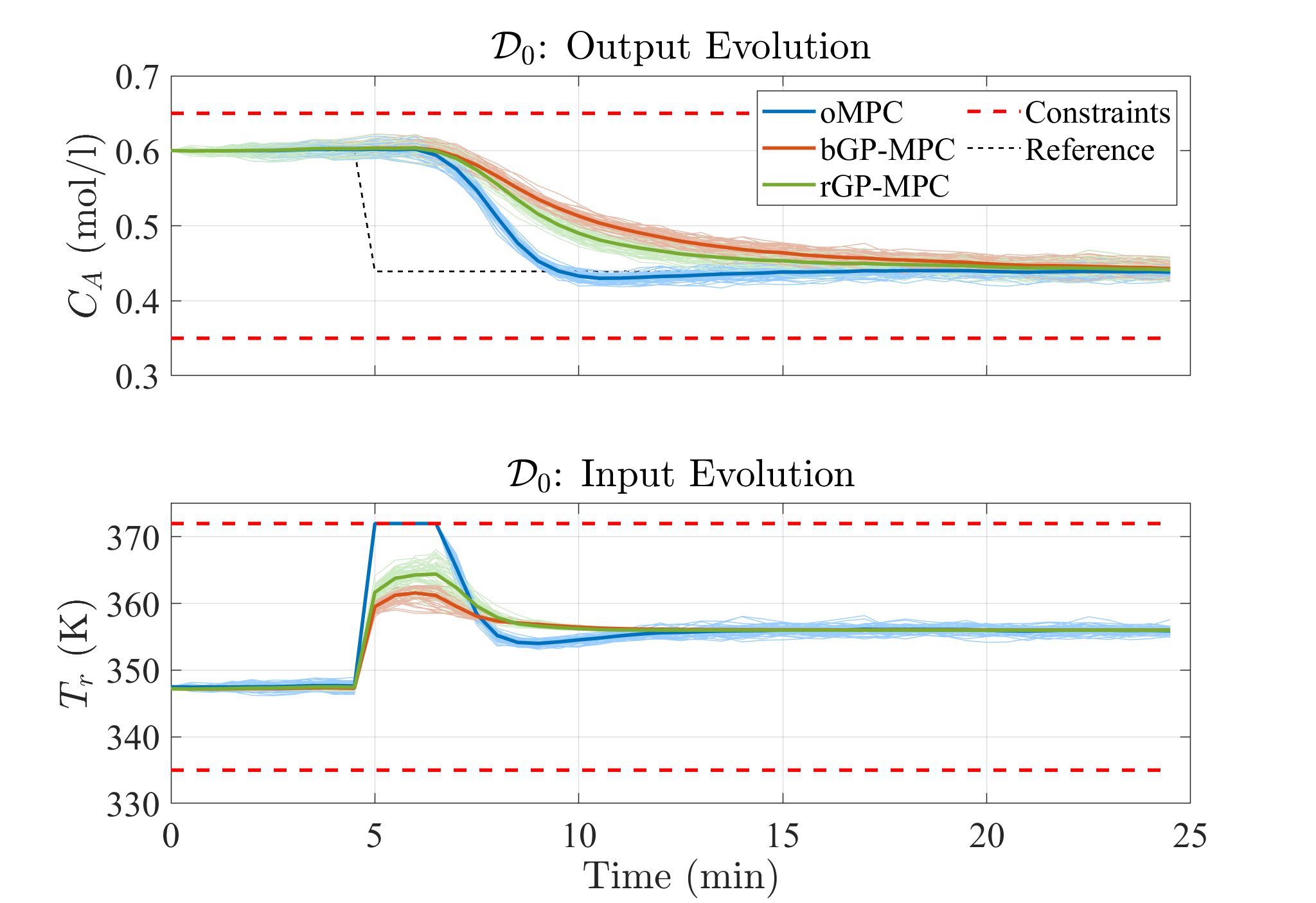}
  \caption{Comparison of the three MPC schemes for the case of initial
  training data \DO.
  Thin lines represent individual simulations, thick lines represent
  mean values.
  }
  \label{fig:MPC_comparison_D0}
\end{figure}

\begin{figure}[htpb]
  \centering
  \includegraphics[width=0.6\linewidth]{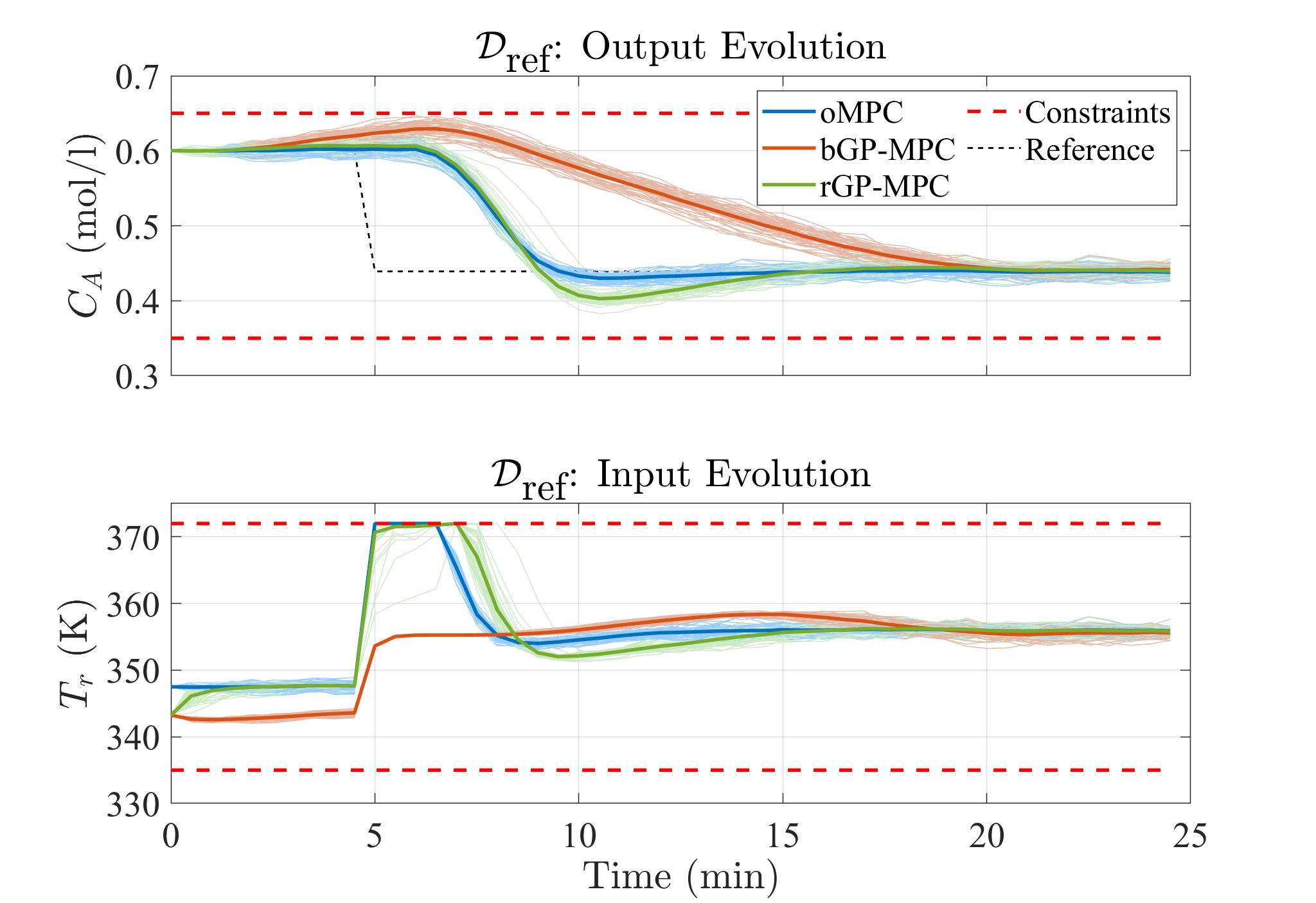}
  \caption{Comparison of the three MPC schemes for the case of initial
  training data \Dref.
  Thin lines represent individual simulations, thick lines represent
  mean values.
  }
  \label{fig:MPC_comparison_Dref}
\end{figure}

\begin{figure}[htpb]
  \centering
  \includegraphics[width=0.6\linewidth]{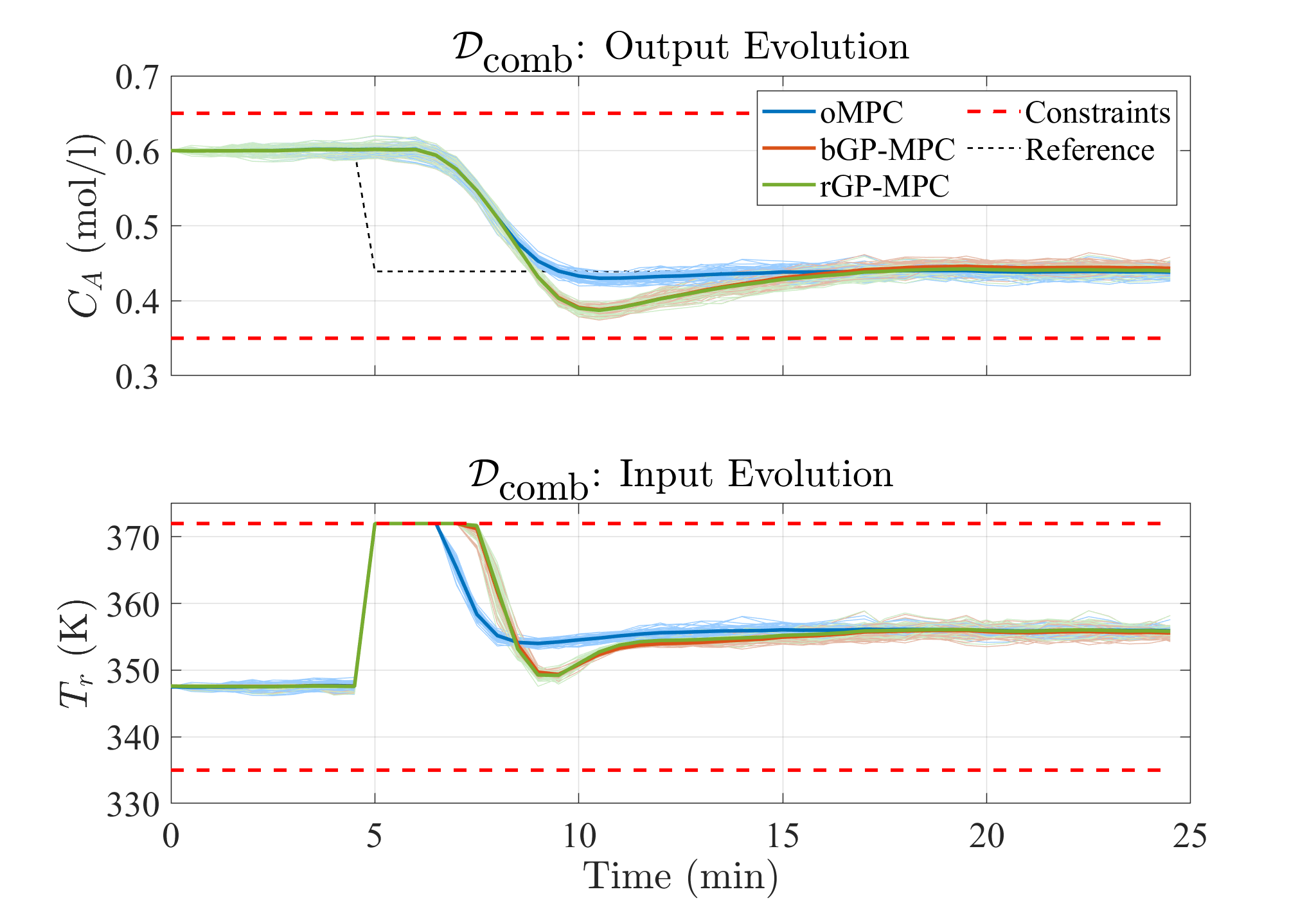}
  \caption{Comparison of the three MPC schemes for the case of initial
  training data \Dcomb.
  Thin lines represent individual simulations, thick lines represent
  mean values.
  }
  \label{fig:MPC_comparison_Dcomb}
\end{figure}

As expected, the oMPC scheme that uses the true model performs
best and always the same (see \cref{tab:CSTRparameters}) because it does
not depend on any training data points.
The rGP outperforms the bGP in the \DO and \Dref cases due to the 
additional information gained during operation.
The performance difference is especially large for \Dref, where
the bGP, throughout the whole operation, has only data points at the
reference at its disposal but not at the initial condition.
The rGP performs significantly better due to the added data points at
the beginning of operation.
In the \Dcomb case, the rGP and bGP performance is almost the same for
the employed training data points.

\begin{remark}
  The previous simulation results suggest that one should in general
  prefer the \Dref case over the other cases, which is convenient for
  the used MPC scheme because knowledge at the reference is required
  anyway to determine the terminal cost and controller.
  Furthermore, this also suggests a practical rule for offline
  hyperparameter determination, namely that the hyperparameters should
  be optimized for a data set that contains the target reference.
\end{remark}

In the second set of simulations, we investigate the influence of
different thresholds used in Rule~\ref{rule:add_data_point}, \ie
different values for the maximum prediction error \epU and the maximum
prediction variance \sigmaU.
To this end, we start with Fig.~\ref{fig:evolution_error_variance} that
combines the rGP results of the previous figures for the three training
data cases, together with the now plotted evolution of the prediction
error \ep and the prediction variance $\sigma_+^2$.
In particular the prediction variance illustrates nicely the difference
between the three cases.
In the case of \DO, the variance is small at the beginning and
increases around $t = \unit[8]{min}$ when the system leaves the
neighborhood of the initial condition and moves towards the reference.
The same holds, but the other way round, for the case with \Dref, where
the initial ($t < \unit[3]{min}$) large error and variance is caused by
their computation before the first data points are added to the training
set.
The prediction error bound $\mu$ is 0.033, 0.021, and 0.024 for the
cases \DO, \Dref, and \Dcomb respectively.

\begin{figure}[htb]
  \centering
  \includegraphics[width=0.6\linewidth]{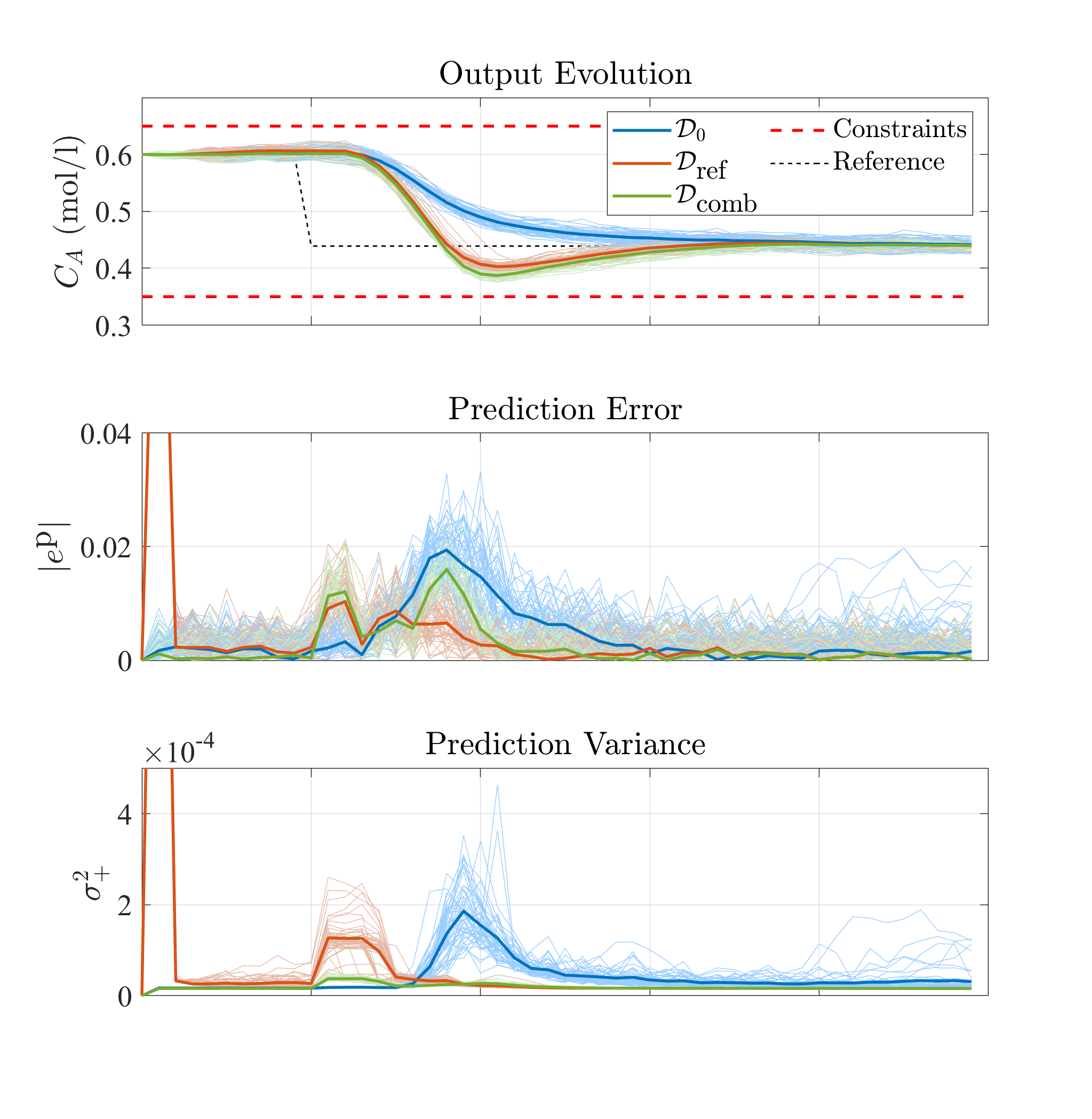}
  \caption{Simulation results with the rGP-MPC for the different training
  data cases together with the absolute value of prediction error
  $\abs{\ep}$ and the prediction variance \postV. 
  }
  \label{fig:evolution_error_variance}
\end{figure}

Fig.~\ref{fig:sim_peThreshold_D_ref} and
Fig.~\ref{fig:sim_varThreshold_D_ref} show results for different
threshold values, where we focus for the sake of brevity on the
simulation case with \Dref.
The results illustrate that instead of adding all data points, almost
the same closed-loop performance can be achieved by adding only a
fraction of them.
Hence, this shows not only that online learning can be achieved but also
that it allows working with significantly smaller training data sets,
which in turn result in lower computational costs.

\begin{figure}[htpb]
  \centering
  \includegraphics[width=0.6\linewidth]{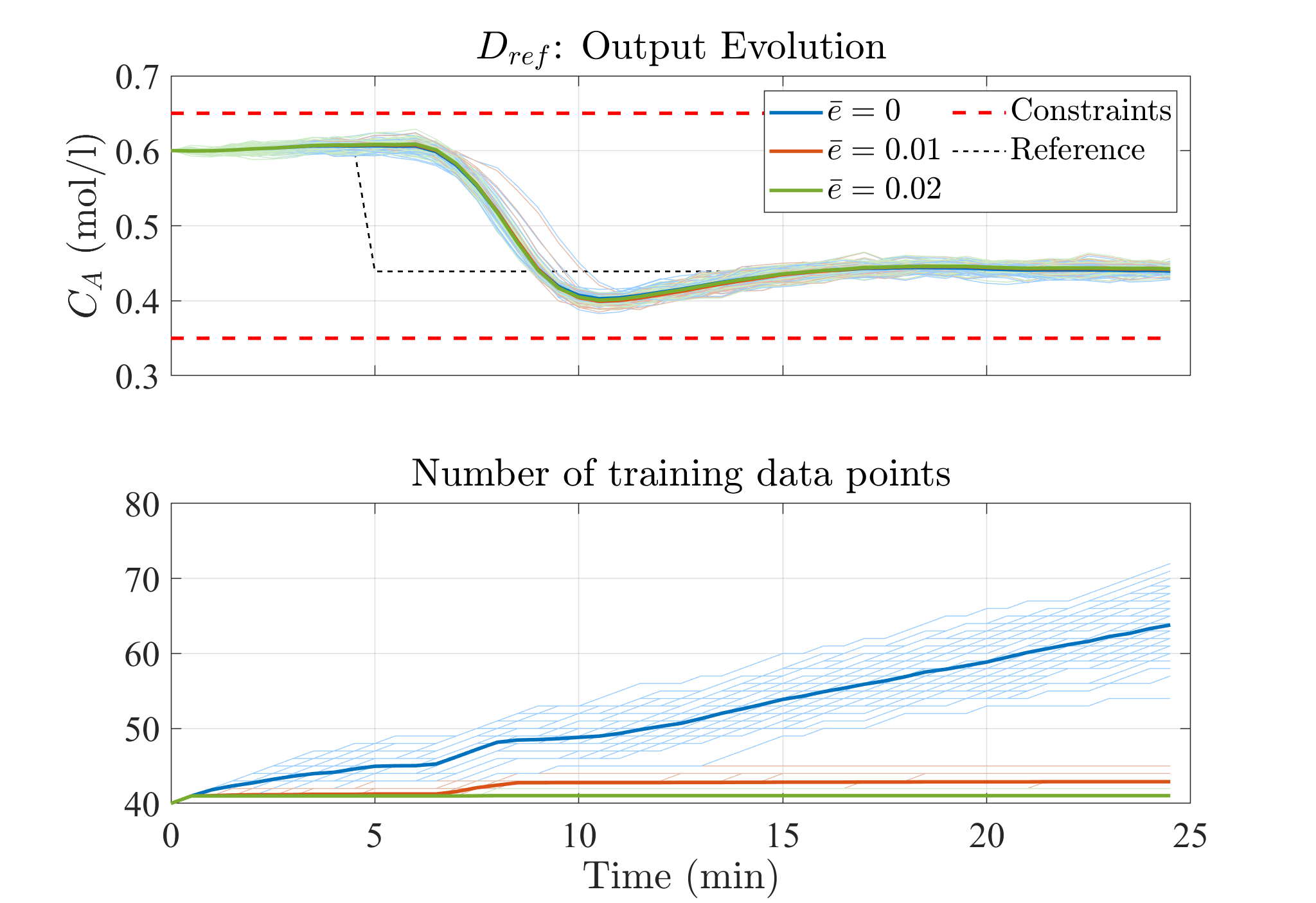}
  \caption{Influence of \epU on the rGP-MPC with initial training data \Dref.
  With $\epU = 0$, every encountered data point is considered to be
  added to the training data set.
  The variance threshold \sigmaU was set to a large value to not
  affect the result.}
  \label{fig:sim_peThreshold_D_ref}
\end{figure}

\begin{figure}[htpb]
  \centering
  \includegraphics[width=0.6\linewidth]{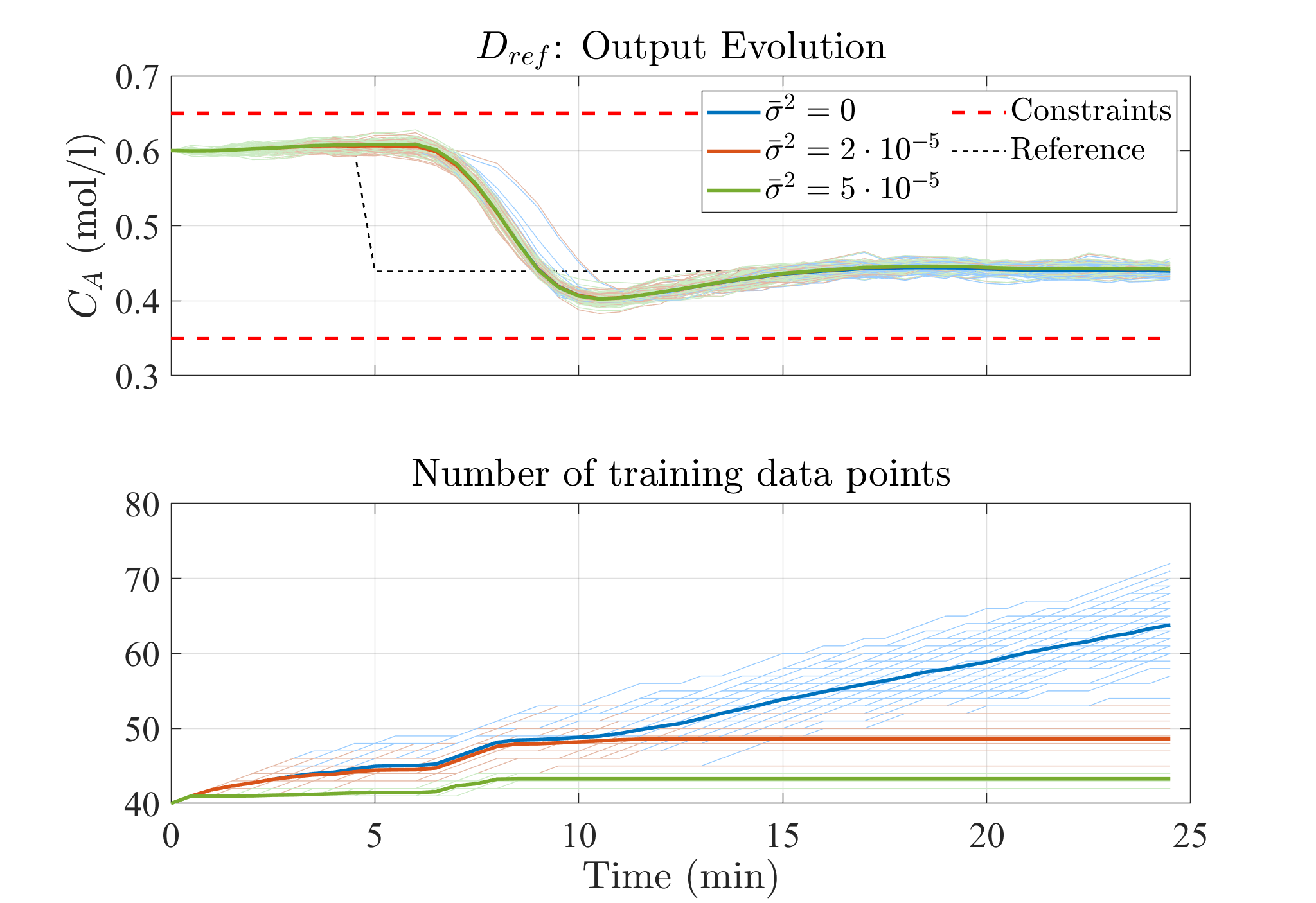}
  \caption{Influence of \sigmaU on the rGP-MPC with initial training
  data \Dref.
  With $\sigmaU = 0$, every encountered data point is considered to be
  added to the training data set.
  The prediction error threshold \epU was set to a large value to not
  affect the result.
  }
  \label{fig:sim_varThreshold_D_ref}
\end{figure}

After evaluating the influence of the parameters of
Rule~\ref{rule:add_data_point}, we illustrate the influence of the
update rule in \cref{thm:nominal_stability}, which guarantees a decrease
of the value function.
To this end, we continue with the \Dref case and additionally insert
outliers into the output measurements in the course of the simulations.
The effect of the update rule is shown in \cref{fig:sim_7_updateRule}.
With it, the results are almost the same as before, except for the
distortions due to the outliers, which however are compensated shortly
after.
All simulation outcomes are very similar in that case.
Without the update rule, the resulting mean output sequence
is different but not necessarily worse (smaller rise time, similar
settling time, no overshoot) than the mean output sequence with the
update rule.
Some of the individual simulation outcomes perform even better,
which is an indication that data points with valuable information are
indeed discarded by the update rule as was also pointed out in
\cref{rem:properties_update_rule}.
On the other hand, the variability among the individual simulations is
much larger.
Several of the simulated output evolutions converge slower
to the target and some do not converge at all until the end of the
simulation.
This is a direct result of the corresponding input sequences computed
by the optimizer.
In between \unit[5]{min} and \unit[11]{min}, the deviation of the mean
input sequence from the optimal input sequence of the performance
bound (oMPC\cf \cref{fig:MPC_comparison_D0,} to
\cref{fig:MPC_comparison_Dcomb}) is larger than in the case with the
update rule.
Furthermore, the individual input sequence outcomes vary
considerably, even hitting the lower constraints.
Due to the inclusion of every encountered data point candidate, the
prediction model changes in some cases in an unfavorable way during the
respective simulations, which leads to the depicted results.
Note that qualitatively the same results (including not converging
output sequences) are obtained, even without outliers.
For instance, between the reference change at \unit[5]{min} and the
first outlier at \unit[7]{min}, we observe that the input sequences
already deviate considerably from the case with an active update rule,
\ie the outlier is not the cause but usual noisy data points.
This illustrates the importance of the update rule in
\cref{thm:nominal_stability}, not only for theoretical guarantees but
also in terms of practical application.

\begin{figure}[htpb]
  \centering
  \includegraphics[width=0.6\linewidth]{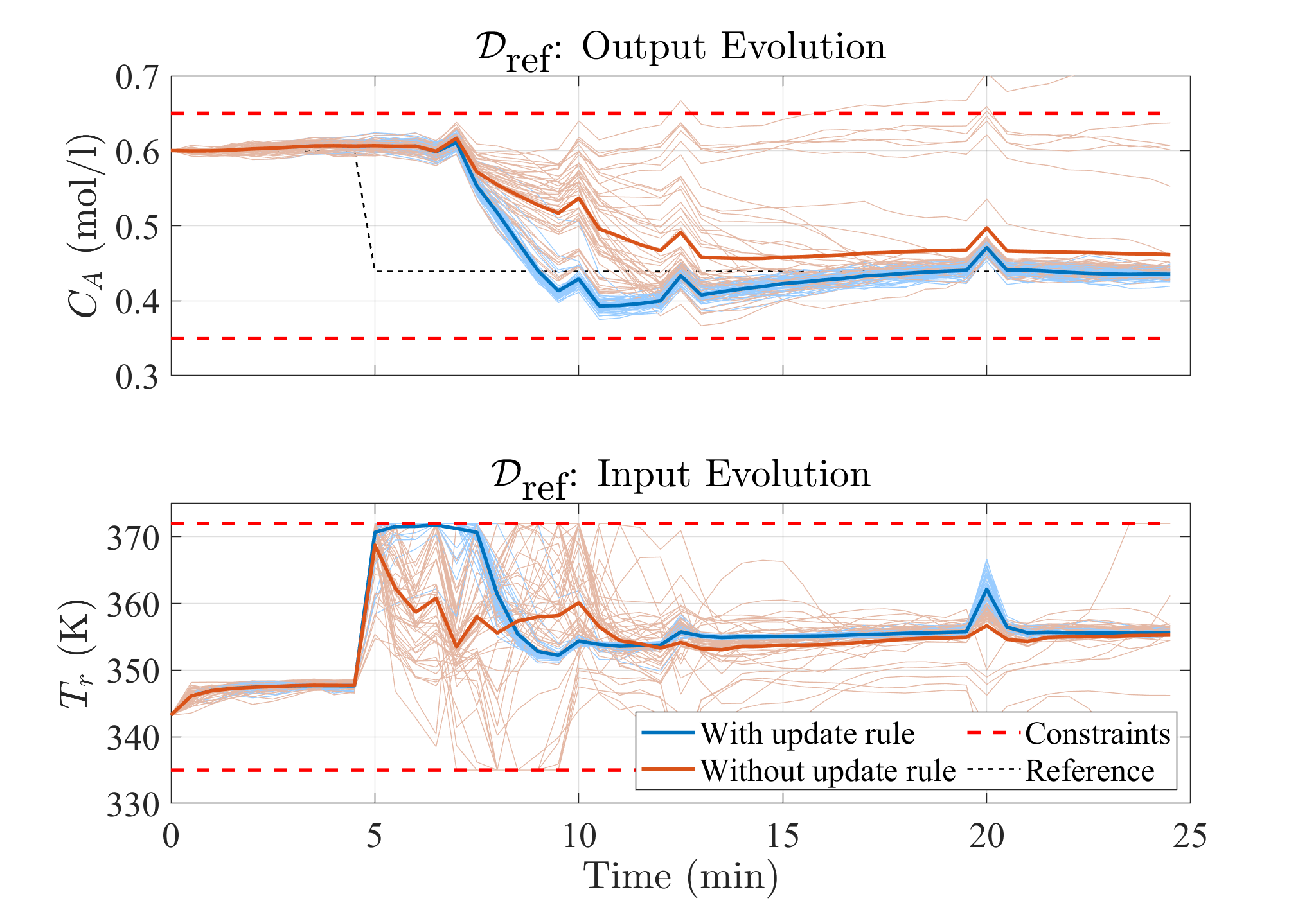}
  \caption{Influence of the update rule in \cref{thm:nominal_stability},
  which permits inclusion of data point candidates only if they result
  in a decreasing value function.
  Outliers are generated at \unit[7]{min}, \unit[10]{min},
  \unit[12.5]{min}, and \unit[20]{min}.
  }
  \label{fig:sim_7_updateRule}
\end{figure}

Next, we consider the case that the number of training data points is
limited by $M$.
For the case of \Dref we set $M = 40$, which is the number of initially
available training points, \ie the training data set cannot increase but
old data points are exchanged with newer more informative ones.
To this end, whenever a new point is added, the oldest data point is
removed.
In Fig.~\ref{fig:limitedTrainPoints} we compare the bGP
(the initial training data set is not updated at all), the rGP with $M =
\infty, \epU = \sigmaU = 0$ (every encountered data point is considered
to be added), and the rGP with $M = 40, \epU = 0.01, \sigmaU =
2\cdot10^{-5}$ (data points are only exchanged).
The bGP result is the same as in Fig.~\ref{fig:MPC_comparison_Dref} and
represents the worst case because the training data set is not updated
at all.
The $M = \infty$ case on the other hand represents the performance bound
for this specific case because it includes the maximum of the incoming
data points and does not remove any.
As can be seen, the reaction of the limited case is a bit slower than
the performance bound case but the resulting settling times are almost
identical.
Thus, with a training data set of only 40 points, where the points are
exchanged during operation, almost the same performance can be achieved
for the considered example as if every encountered point was included
in the training data set \D.

\begin{figure}[htpb]
  \centering
  \includegraphics[width=0.6\linewidth]{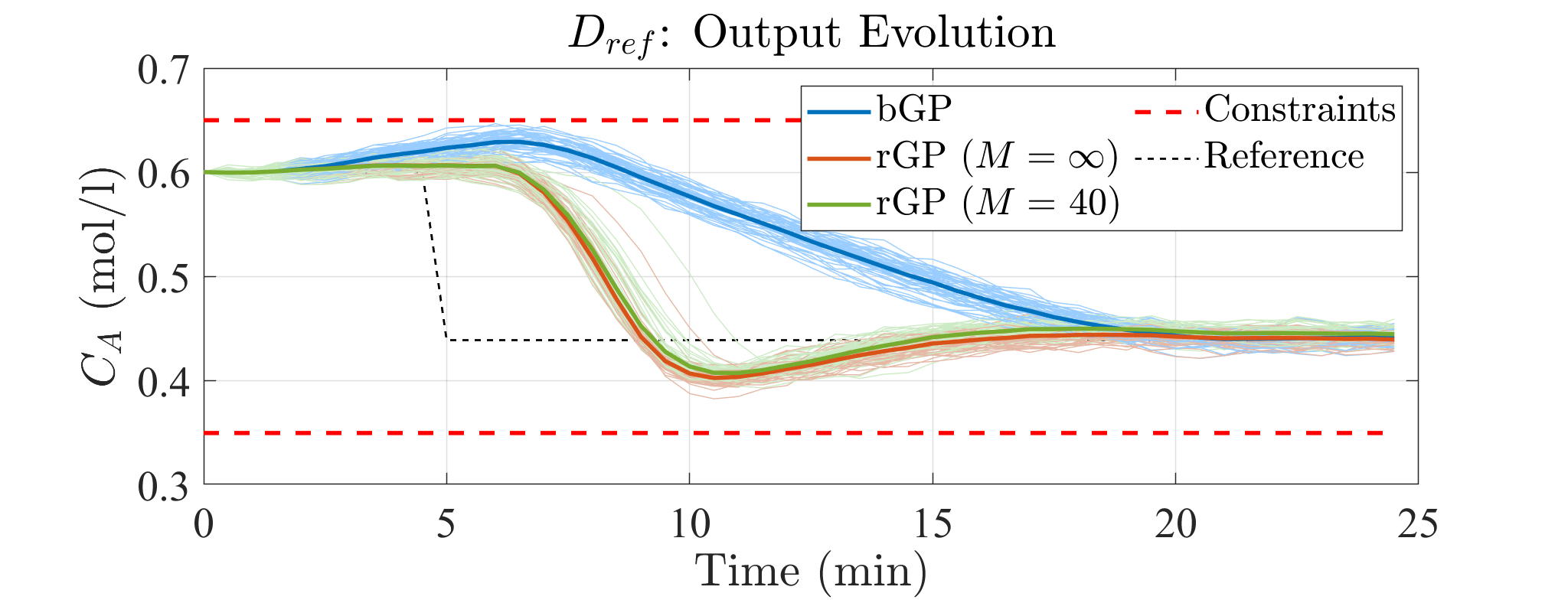}
  \caption{Influence of a limited number of training data points on the
  rGP-MPC with initial training data \Dref.
  }
  \label{fig:limitedTrainPoints}
\end{figure}

Besides the computational cost reduction due to the possibility to work
with smaller training data sets, we also illustrate the computational
reduction due to the recursive update of the Cholesky factor.
In Fig.~\ref{fig:compTime} we continue with the \Dref case, where we
add every incoming point to the training data set and compare the
computation times of the full and the recursive update of the Cholesky
factor. 
The results show that the larger the training data set becomes, the
larger the absolute and relative computational reduction.
At $t = \unit[24]{min}$ the full recomputation of the Cholesky factor
increases significantly.
Investigations point to the reason lying in the generation of the
covariance matrix and the inner workings of Matlab's \texttt{chol}
function to compute the Cholesky decomposition.

\begin{figure}[htpb]
  \centering
  \includegraphics[width=0.6\linewidth]{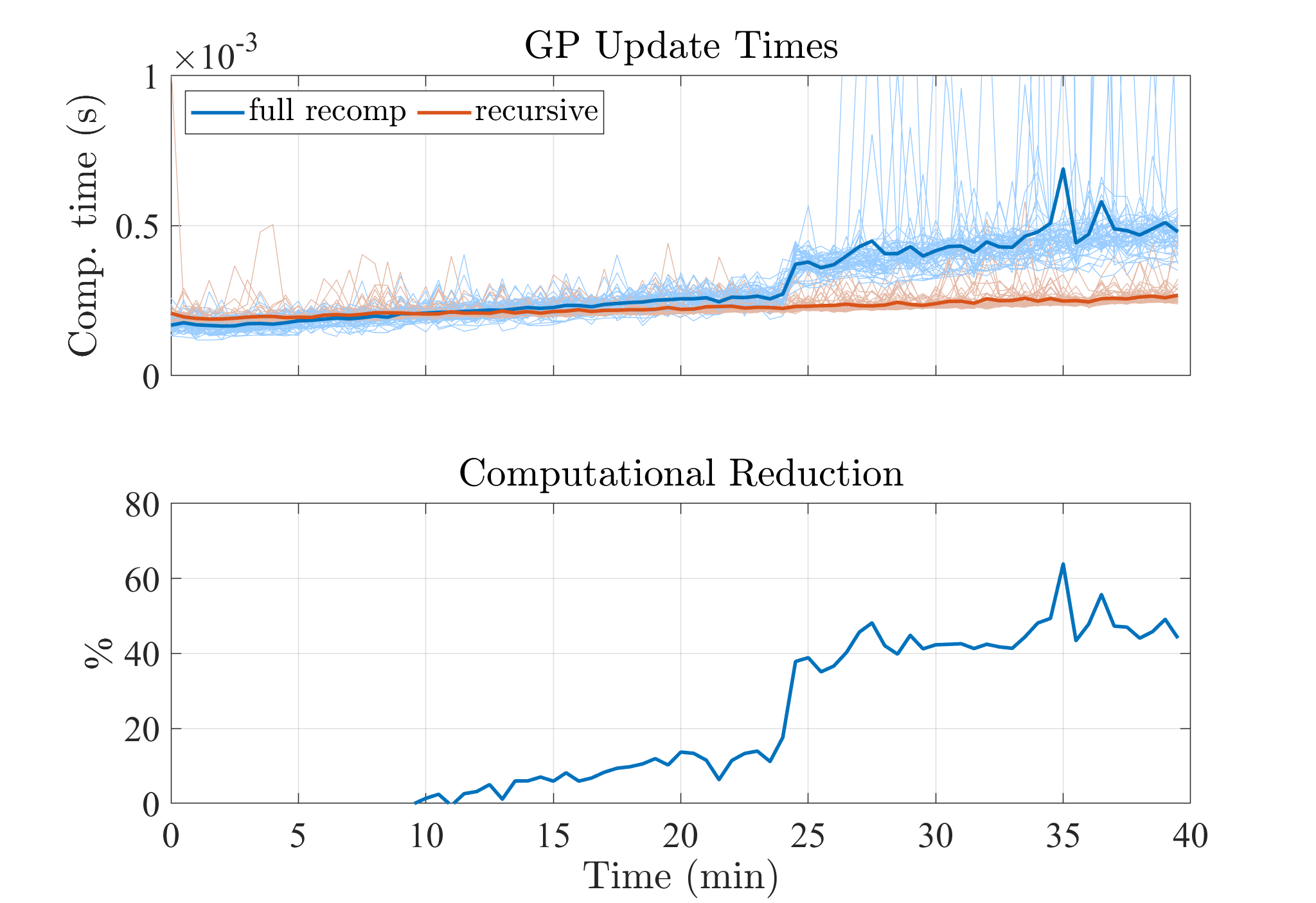}
  \caption{Comparison of computation times of the full recalculation of
  the Cholesky factor and the recursive update.
  The computational reduction that goes along with the recursive update
  increases with the amount of training data points.
  }
  \label{fig:compTime}
\end{figure}

At last we present simulations of the robust feasible set $\Omega^0$, also
denoted region of attraction (ROA), and how it changes for different
maximum prediction errors $\mu$.
We continue with the \Dref case with $\epU = \sigmaU = 0$ such
that every data point is considered as a candidate for inclusion.
Furthermore, $M$ is set to a large value such that no points are removed
from \Dk.
Each initial condition $\x_0 = [y_0 \ y_0 \ y_0]\T$ is simulated
30 times.
Different $\mu$ values are obtained by varying the measurement noise
from $\Sn = 0.003^2$ to $\Sn = 0.012^2$, where $\mu$ is then the largest
error of all simulation runs and time steps.
The result in Fig.~\ref{fig:ROA} yields a clear tendency.
The larger $\mu$, the smaller $\Omega^0$.

\begin{figure}[htpb]
  \centering
  \includegraphics[width=0.6\linewidth]{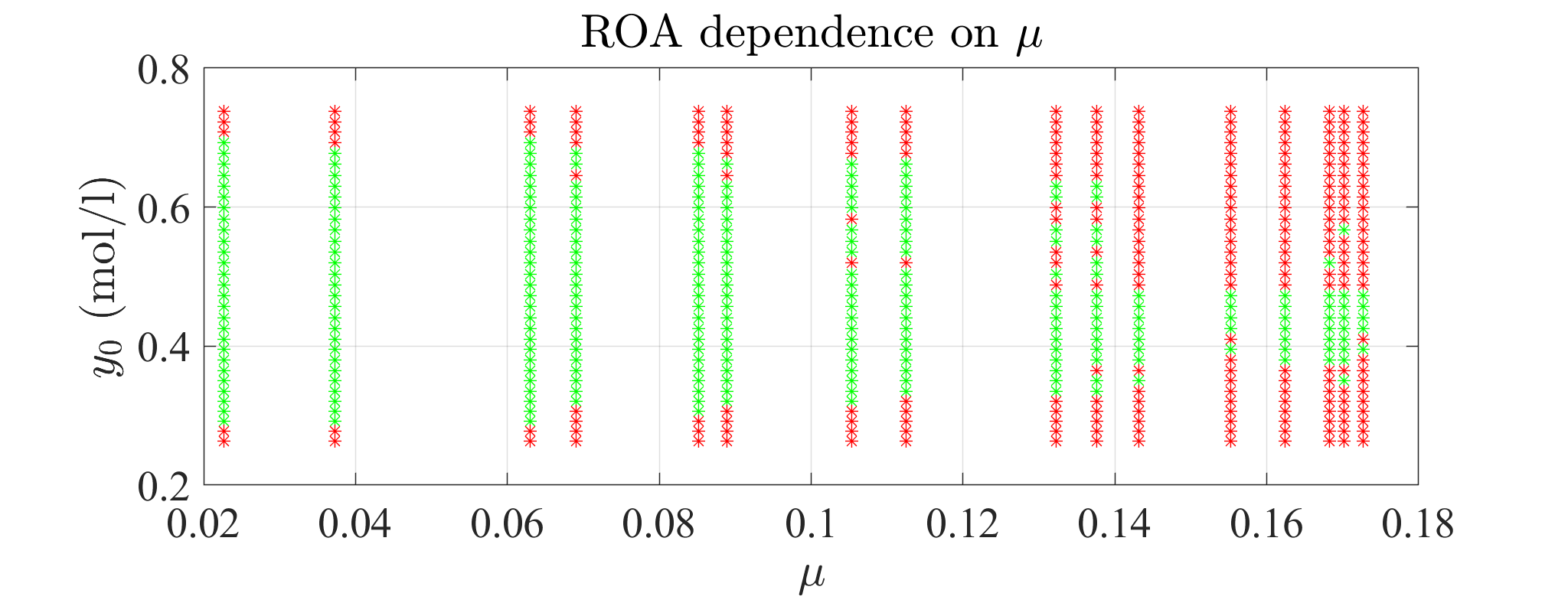}
  \caption{
    Change of the region of attraction $\Omega^0$ for different $\mu$.
    Red stars denote infeasible initial conditions, green stars
    feasible initial conditions.
    An initial condition is marked as infeasible if at least one
    simulation resulted in a constraint violation.
  }
  \label{fig:ROA}
\end{figure}

\section{Conclusion}
\label{sec:conclusion}

In this work, we outlined the use of a \GP-based nonlinear
autoregressive model with exogenous input for prediction in an output
feedback \MPC scheme.
The approach allows for online learning, by means of updating the
training data set, to account for limited a priori process knowledge and
the possibility for adaptation during operation.
To this end, the concept of evolving GPs was adapted together with a
recursive formulation to update the Cholesky decomposition to minimize
computational cost.
It was shown that the resulting \MPC scheme is input-to-state stable
with respect to the prediction error, despite
the time-varying nature of the GP prediction model.
Notably, the theoretic guarantees are not limited to
\GPs.
They are rather valid for all online learning methods that satisfy the
presented conditions.

The approach was verified in simulations, which have shown 
that it is in general possible to start with limited a priori process
knowledge and refining the model during operation.
One important finding is that it is particularly beneficial to start
with a model that captures at least the behavior at the target
reference, which is fortunately an intrinsic necessity for all MPC
schemes that use a terminal region, cost, and controller to
guarantee recursive feasibility and stability.
In the case of fixed hyperparameters during online operation, 
a further consequence is that the hyperparameters should be
optimized offline for a data set that captures the target reference.
Furthermore, the presented formulation yields good
closed-loop performance with few training data
points, thereby efficiently reducing the computational load.
This presents itself as a possible option for very fast processes,
where hyperparameter optimization is not an option but some kind of
online learning is desirable.
Additionally, due to the output feedback scheme, this approach can be
employed for processes, whose state cannot be measured or is difficult
to be estimated.

Future work aims at implementing the presented approach in laboratory
experiments, together with a combination of a deterministic base model
and the \GP prediction model.
From a theoretical point of view, time-varying reference tracking
instead of set-point changes would be interesting to investigate.
For instance, what conditions does the initial training data set has to
satisfy to achieve acceptable tracking results and how to automatically
compute safe thresholds for the data inclusion approach.
Another interesting question to investigate is how the approach
performs for time-varying processes.
A hypothesis would be to combine the squared exponential covariance
function with a non-stationary one to account for time variance in the
process model.

\section*{Acknowledgments}
The authors would like to thank the reviewers for helpful suggestions
and discussions.







\appendix

\section{Recursive Cholesky Factor Update}
\label{sec:recursive_cholesky_factor_update}

According to Osborne\cite{Osborne2010}, the Cholesky factor can be
updated recursively as presented in the following.
Regarding the case of including a new data point, consider the
covariance matrix \KK, represented in block form as
\begin{align*}
  \Matrix{
    \KK_{11}   & \KK_{13} \\
    \KK_{13}\T & \KK_{33}
  }
\end{align*}
and its Cholesky factor
\begin{align*}
  \Matrix{
    \RR_{11}   & \RR_{13} \\
    0          & \RR_{33}
  } \ .
\end{align*}
Now, given an updated covariance matrix
\begin{align*}
  \Matrix{
    \KK_{11}   & \KK_{12}   & \KK_{13} \\
    \KK_{12}\T & \KK_{22}   & \KK_{23} \\
    \KK_{13}\T & \KK_{23}\T & \KK_{33}
  }
\end{align*}
that differs from the previous by insertion of a new row and
column, the updated Cholesky factor
\begin{align*}
  \Matrix{
    \S_{11} & \S_{12} & \S_{13} \\
    0       & \S_{22} & \S_{23} \\
    0       & 0       & \S_{33}
  } \ .
\end{align*}
can be computed via
\begin{align}
\begin{aligned}
  \S_{11} &= \RR_{11}&
  \S_{22} &= \text{chol}\big(\KK_{22} - \S_{12}\T \S_{12} \big) \\
  \S_{12} &= \RR_{11}\T \backslash \KK_{12}&
  \S_{23} &= \S_{22}\T \backslash \big(\KK_{23} - \S_{12}\T \S_{13}\big) \\
  \S_{13} &= \RR_{13}&
  \S_{33} &= \text{chol}\big(\RR_{33}\T \RR_{33} - \S_{23}\T \S_{23}\big) \ .
\end{aligned}
  \label{eq:cholUpdate}
\end{align}

On the other hand, if the current covariance matrix in block form
\begin{align*}
  \Matrix{
    \KK_{11}   & \KK_{12}   & \KK_{13} \\
    \KK_{12}\T & \KK_{22}   & \KK_{23} \\
    \KK_{13}\T & \KK_{23}\T & \KK_{33}
  }
\end{align*}
with Cholesky factor
\begin{align*}
  \Matrix{
    \RR_{11} & \RR_{12} & \RR_{13} \\
    0        & \RR_{22} & \RR_{23} \\
    0        & 0        & \RR_{33}
  } 
\end{align*}
is reduced by one row and column, such that we obtain
\begin{align*}
  \Matrix{
    \KK_{11}   & \KK_{13} \\
    \KK_{13}\T & \KK_{33}
  } \ ,
\end{align*}
the downdated Cholesky factor
\begin{align*}
  \Matrix{
    \S_{11}   & \S_{13} \\
    0         & \S_{33}
  } 
\end{align*}
can be computed via
\begin{align}
\begin{aligned}
  \S_{11} &= \RR_{11} \\
  \S_{13} &= \RR_{13} \\
  \S_{33} &= \text{chol}\big(\RR_{23}\T \RR_{23} + \RR_{33}\T
    \RR_{33}\big) \ .
\end{aligned}
\label{eq:cholDowndate}
\end{align}

\section{Posterior Mean Gradient}
\label{sec:posterior_mean_gradient}

The optimal control problem \eqref{eq:OCP} requires a terminal
cost function, which can be based on a linearized version of
the prediction model in Section~\ref{sec:simulations}.
To this end we require the gradient of the GP posterior mean function
\begin{align*}
  \nabla \postM(\w) = \frac{\partial \postM(\w)}{\partial \w}
    = \left[ \frac{\partial \postM(\w)}{\partial w_1} \ \cdots \
      \frac{\partial \postM(\w)}{\partial w_\Nw} \right]\T
\end{align*}
w.r.t. to its regressor $\w = [w_1, \ldots, w_\Nw]$, where we omit 
the dependence on the training data \D for the sake of brevity.

Assuming a constant prior mean in \eqref{eq:postM} we obtain
\begin{align*}
  \nabla \postM(\w) = \frac{\partial k(\w,\wb)}{\partial \w} \KKi (\zb -
    m(\w))
\end{align*}
with
\begin{align*}
  \frac{\partial k(\w,\wb)}{\partial \w} = 
    \left[ \frac{\partial k(\w,\w_1)}{\partial \w} \cdots \frac{\partial
    k(\w,\w_n)}{\partial \w} \right] \ ,
\end{align*}
where $\w_1, \ldots, \w_n$ is the corresponding regressor of each of the
$n$ measured training data points in \D.

For the covariance function \eqref{eq:covSEard}, we obtain
\begin{align*}
  \frac{\partial k(\w,\w')}{\partial \w} = k^*(\w,\w') \Lambda (\w'-\w)
  \ ,
\end{align*}
where $k^*(\w,\w')$ is \eqref{eq:covSEard} without the noise term, \ie
$\Sn = 0$.


\bibliography{bibliography}%

\begin{table}[htpb]
  \centering
  \caption{CSTR Parameters}
  \label{tab:CSTRparameters}
    \begin{tabular}{lll}
    \toprule
      Param. & Explanation & Value \\
    \midrule
      $q_0$ & Reactive input flow & \unit[10]{l/min} \\
      $V$   & Liquid volume in the tank & \unit[150]{l} \\
      $k_0$ & Frequency constant & \unit[$6\cdot10^{10}$]{1/min} \\
      $E/R$ & Arrhenius constant & \unit[9750]{K} \\
      $\Delta H_\text r$ & Reaction enthalpy & \unit[10000]{J/mol} \\
      $UA$  & Heat transfer coefficient & \unit[70000]{J/(min K)} \\
      $\rho$ & Density & \unit[1100]{g/l} \\
      $C_\text p$ & Specific heat & \unit[0.3]{J/(g K)} \\
      $\tau$ & Time constant & \unit[1.5]{min} \\
      $C_{A \text f}$ & $C_A$ in the input flow & \unit[1]{mol/l} \\
      $T_\text f$ & Input flow temperature & \unit[370]{K} \\
    \bottomrule
    \end{tabular}
\end{table}

\begin{table}[htpb]
  \centering
  \caption{Hyperparameters}
  \label{tab:hyperparameters}
    \begin{tabular}{lcccccc}
    \toprule
              & $c$ & $l_1$ & $l_2$ & $l_3$ & $l_4$ & \Sf \\
    \midrule
      \DO     & 0.64 & 0.07 & 0.29 & 0.14 & 9.93 & 0.06 \\
      \Dref   & 0.36 & 0.20 & 11.7 & 0.64 & 5.07 & 0.13 \\
      \Dcomb  & 0.43 & 0.42 & 2.09 & 1.01 & 2.83 & 0.26 \\
    \bottomrule
    \end{tabular}
\end{table}

\begin{table}[htpb]
  \centering
  \caption{MPC Performance computed by \eqref{eq:performance}.}
  \label{tab:performance}
    \begin{tabular}{lcccccc}
    \toprule
              & $\DO $ & \Dref & \Dcomb  \\
    \midrule
      oMPC    & 59.5   & 59.5  & 59.5 \\
      bGP-MPC & 71.3   & 95.3  & 66.2 \\ 
      rGP-MPC & 64.5   & 63.6  & 66.7 \\ 
    \bottomrule
    \end{tabular}
\end{table}

\begin{algorithm}
\caption{Recursive Guassian Process Model Predictive Control}
  \label{alg1}
\begin{algorithmic}
\item \textbf{MPC Parameters:} Prediction horizon $N$, stage cost
  $\ell(\cdot)$ with respective parameters, hard input constraint set
  \U, output constraint set \Y. 
\item \textbf{rGP Parameters:} Prior mean $m(\cdot)$, covariance
  function $\cov(\cdot,\cdot)$, initial hyperparameters \thetaa,
  thresholds \epU and \sigmaU, maximum number of training points $M$.
\item 
\item \textbf{Initialization} 
\item Training data set \D.
\item Optimize hyperparameters \thetaa \eqref{eq:loglike} with initial
  data set \D.
\item Initialize GP posterior mean function $\postM(\w)$ with covariance
  matrix \KK, Cholesky factor \RR, and \alphaa
  (Sec.~\ref{sec:cholesky_decomposition}).
\item Compute GP posterior mean gradient $\nabla \postM(\w)$
  (Sec.~\ref{sec:posterior_mean_gradient}).
\item Compute linear GP model at \xref
  (Sec.~\ref{sec:terminal_ingredients}).
\item Compute terminal cost function $\Vf(\cdot)$
  (Sec.~\ref{sec:terminal_ingredients}).
\item 
\item \textbf{Recursion}
  \For{each time step $k$}
    \State Solve optimal control problem \eqref{eq:OCP} for initial
    condition \xk and obtain optimal input sequence $\uS^*$.
    \State Apply first element $\uk = \kMPC(\xk|\Dk) = \hu^*_{k|k}$.
    \State Obtain new output \ykp.
    \State Construct new GP data point $(\wk,\ykp)$ with $\wk =
      (\xk,\uk)$.
    \State \textbf{Update GP:}
    \State Compute $\hykp = \postM(\wk|\Dk)$ and $\postV = \postV(\wk|\Dk)$.
    \If{$\abs{\ykp - \hykp} > \epU$ OR $\postV > \sigmaU$}
      \State $\Dskp = \Dk \cup (\wk,\ykp)$.
      \State Using \Dskp, compute \KK' and \RR' via \eqref{eq:cholUpdate}.
      \If{number of training points $> M$}
        \State Remove oldest data point and downdate \KK' and \RR' via
        \eqref{eq:cholDowndate}.
      \EndIf
      \State Compute \alphaa' via \eqref{eq:alphaa}.
      \If{$V_N^*\big( \xk | \Dskp \big) \leq V_N^*\big( \xk | \Dk \big)$}
        \State $\Dkp = \Dskp$
        \State Make \KK', \RR', and \alphaa' effective.
        \Else
        \State $\Dkp = \Dk$
        \State Reverse \KK', \RR', and \alphaa'.
      \EndIf
      \Else
      \State $\Dkp = \Dk$
    \EndIf
  \EndFor
\end{algorithmic}
\end{algorithm}




\end{document}